\newtheorem{thm}{Theorem}[section]
\newcommand{\bt}{\begin{thm}}
\newcommand{\et}{\end{thm}}
\newtheorem{ex}[thm]{Example}
\newtheorem{cor}[thm]{Corollary}   
\newcommand{\bc}{\begin{cor}}
\newcommand{\ec}{\end{cor}}
\newtheorem{lem}[thm]{Lemma}   
\newcommand{\bl}{\begin{lem}}
\newcommand{\el}{\end{lem}}
\newtheorem{prop}[thm]{Proposition}
\newcommand{\bp}{\begin{prop}}
\newcommand{\ep}{\end{prop}}
\newtheorem{defn}[thm]{Definition}
\newcommand{\bd}{\begin{defn}}    
\newcommand{\ed}{\end{defn}}
\newtheorem{rmrk}[thm]{Remark}   
\newcommand{\br}{\begin{rmrk}}
\newcommand{\er}{\end{rmrk}}
\newcommand{\N}{\mathbb{N}}
\newcommand{\R}{\mathbb{R}}
\DeclareMathOperator{\dis}{dis}
\newcommand{\diam}{\operatorname{Diam}}
\def\XXint#1#2#3{{\setbox0=\hbox{$#1{#2#3}{\int}$ }
\vcenter{\hbox{$#2#3$ }}\kern-.6\wd0}}
\begin{document}

\title[Null Distance and Convergence of Warped Product Spacetimes]{Null Distance and Gromov-Hausdorff Convergence of Warped Product Spacetimes}

\author{Brian Allen}
\address[CUNY]{Lehman College, CUNY}
\urladdr{\url{https://sites.google.com/view/brian-allen}}

\begin{abstract}
 What is the analogous notion of Gromov-Hausdorff convergence for sequences of spacetimes? Since a Lorentzian manifold is not inherently a metric space, one cannot simply use the traditional definition. One approach offered by Sormani and Vega \cite{SV} is to define a metric space structure on a spacetime by means of the null distance. Then one can define convergence of spacetimes using the usual definition of Gromov-Hausdorff convergence. In this paper we explore this approach by giving many examples of sequences of warped product spacetimes with the null distance converging in the Gromov-Hausdorff sense.  In addition, we give an optimal convergence theorem which shows that under natural geometric hypotheses a sequence of warped product spacetimes converge to a specific limiting warped product spacetime. The examples given further serve to show that the hypotheses of this convergence theorem are optimal.
\end{abstract}

\maketitle

\section{Introduction}

 What is the analogous notion of Gromov-Hausdorff convergence for sequences of spacetimes? Since a Lorentzian manifold is not inherently a metric space, one cannot simply use the traditional definition. This problem was initially taken up by J. Noldus \cite{N}, which has been more recently addressed by E. Minguzzi and S. Suhr \cite{MS} and by O. M\"{u}ller \cite{M}. Another approach offered by Sormani and Vega \cite{SV} is to define a metric space structure on a spacetime by means of the null distance. Then one can define convergence of spacetimes using the usual definition of Gromov-Hausdorff convergence for metric spaces. In this paper we explore this approach involving the null distance and give many examples which can be used to further compare and contrast all approaches.

When one considers a sequence of spacetimes equipped with the null distance it is natural to ask for geometric hypotheses which can guarantee convergence to a particular null distance spacetime. In the case of sequences of Riemannian manifolds, the author, R. Perales, and C. Sormani \cite{Allen-Perales-Sormani} provided such a convergence theorem for Sormani-Wenger Intrinsic Flat convergence. The theorem says that if the diameter of the sequence is bounded, the volume converges to the desired limit volume, and distances converge from below then the sequence converges to the desired limiting Riemannian manifold in the volume preserving Sormani-Wenger Intrinsic Flat sense. Many illuminating examples were given by the author and C. Sormani \cite{Allen-Sormani, Allen-Sormani-2} which justified the hypotheses. Our goal in this paper is to continue the analogous investigation for null distance spacetimes by providing many probing examples and establishing convergence theorems for warped product spacetimes under optimal hypotheses.

The first such theorem shows that warped product spacetimes equipped with the null distance, whose warping function $f_j$ is larger than a warping function $f_{\infty}$, and so that $f_j$ converges to $f_{\infty}$ in $L^1$, converge in the uniform sense, i.e $C^0$ convergence of distance functions on $M \times M$ (defined in section \ref{sect-Background}). Note that for a spacetime $(M,g_j)$ with a time function $\tau:M \rightarrow \R$ we denote the corresponding null distance by $\hat{d}_{\tau,g_j}$, defined in section \ref{sect-Background}.

\begin{thm}\label{thm-MainTheorem tau}
Let $(\Sigma^n,\sigma)$ be a compact, connected Riemannian manifold, $M=[t_0,t_1]\times \Sigma$, $f_j:[t_0,t_1]\rightarrow (0,\infty)$ continuous, $g_j=-dt^2+f_j(t)^2\sigma$, $j \in \N\cup\{\infty\}$. If $\tau(t)=\int_0^t \frac{1}{f_{\infty}(r)}dr$,
\begin{align}
\int_{t_0}^{t_1}|f_j-&f_{\infty}|dt \rightarrow 0,
  \\ f_j(t) &\ge  f_{\infty}(t), \quad \forall t \in [t_0,t_1],\label{eq lower bound}
\end{align}
then
\begin{align}
    (M,\hat{d}_{\tau,g_j})\rightarrow (M,\hat{d}_{\tau,g_{\infty}}),
\end{align}
uniformly. 
\end{thm}

\begin{rmrk}
    One should notice that $\tau(t)=\int_0^t \frac{1}{f_{\infty}(r)}dr$ is differentiable $\tau'(t)=\frac{1}{f_{\infty}(t)}$, $|\tau'|\le C$, and $\nabla \tau = \frac{1}{f_{\infty}(t)} \nabla t$ with respect to $g_0=-dt^2+\sigma$. Hence by Proposition 4.12 of C. Sormani and C. Vega \cite{SV} we see that $\tau$ is anti-Lipshitz (see Definition \ref{def-AntiLipschitz}) and defines a definite null distance. By Theorem 1.9 of A. Burtscher and  L. Garc\`{ı}a-Heveling \cite{BG2} we also know that $\tau$ encodes causality, i.e. $p \le q$ if and only if $\hat{d}_{\tau,g}(p,q)=|\tau(p)-\tau(q)|$.
\end{rmrk}

 Our goal in the next theorem is to have the convergence of the sequence of null distance warped product spacetimes with respect to the $t$-coordinate, which is the cosmological time function in the case of warped products (See Lemma \ref{lem-tIsCTF}). Additionally, one would like to relax the lower bound condition \eqref{eq lower bound} to just convergence from below. We also note that the proof of Theorem \ref{thm-MainTheorem t} uses Theorem \ref{thm-MainTheorem tau} and properties of the null distance.

\begin{thm}\label{thm-MainTheorem t}
Let $(\Sigma^n,\sigma)$ be a compact, connected Riemannian manifold, $M=[t_0,t_1]\times \Sigma$, $f_j:[t_0,t_1]\rightarrow (0,\infty)$ continuous, $g_j=-dt^2+f_j(t)^2\sigma$, $j \in \N\cup\{\infty\}$. If
\begin{align}
 \int_{t_0}^{t_1}&|f_j-f_{\infty}|dt \rightarrow 0,
  \\ f_j(t) &\ge \left( 1 -\frac{1}{j} \right)f_{\infty}(t), \quad \forall t \in [t_0,t_1],
\end{align}
then
\begin{align}
    (M,\hat{d}_{t,g_j})\rightarrow (M,\hat{d}_{t,g_{\infty}}),
\end{align}
uniformly. 
\end{thm}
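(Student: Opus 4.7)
The plan is to deduce Theorem~\ref{thm-MainTheorem t} from Theorem~\ref{thm-MainTheorem tau} in three steps: modify the sequence so that the pointwise lower bound in Theorem~\ref{thm-MainTheorem tau} holds, sandwich $\hat{d}_{t,g_j}$ via monotonicity of the null distance in the warping function, and convert the resulting $\tau$-null distance convergence to $t$-null distance convergence using the warped product structure.

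First I would set $\tilde{f}_j := \max(f_j, f_\infty)$ and $\hat{f}_j := (1 - \frac{1}{j})f_\infty$, so that $\hat{f}_j \le f_j \le \tilde{f}_j$ pointwise, both modifications are continuous, $\tilde{f}_j \ge f_\infty$ everywhere, and
\[
\int_{t_0}^{t_1} |\tilde{f}_j - f_\infty|\, dt \;\le\; \int_{t_0}^{t_1} |f_j - f_\infty|\, dt \;\to\; 0.
\]
Theorem~\ref{thm-MainTheorem tau} applied to the sequence $\tilde{f}_j$ then yields uniform convergence $\hat{d}_{\tau, \tilde{g}_j} \to \hat{d}_{\tau, g_\infty}$, where $\tilde{g}_j := -dt^2 + \tilde{f}_j^2 \sigma$ and $\tau(t) = \int_0^t 1/f_\infty(r)\,dr$.

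Next I would invoke the key monotonicity property of the null distance: if $f \le f'$ pointwise, then the causal cone for the warped product with warping $f'$ is contained in that for the one with warping $f$, so every piecewise causal curve for the larger warping is piecewise causal for the smaller, giving $\hat{d}_{\cdot, g_f} \le \hat{d}_{\cdot, g_{f'}}$ for any choice of time function. Applied to the chain $\hat{f}_j \le f_j \le \tilde{f}_j$, this yields
\[
\hat{d}_{t, \hat{g}_j}(p,q) \;\le\; \hat{d}_{t, g_j}(p,q) \;\le\; \hat{d}_{t, \tilde{g}_j}(p,q),
\]
where $\hat{g}_j := -dt^2 + \hat{f}_j^2 \sigma$. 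For the lower sandwich term, $\hat{g}_j$ is just $g_\infty$ with the fiber metric $\sigma$ rescaled by the constant $(1-\frac{1}{j})^2$, so a direct computation (substituting $d_{(1-1/j)^2\sigma} = (1-\frac{1}{j})d_\sigma$ in the warped-product formula for $\hat{d}_t$) shows $\hat{d}_{t, \hat{g}_j} \to \hat{d}_{t, g_\infty}$ uniformly. For the upper sandwich term, I would translate Theorem~\ref{thm-MainTheorem tau}'s $\tau$-null distance convergence to $t$-null distance convergence using the warped-product identity
\[
|t_i - t_{i-1}| \;=\; f_\infty(\xi_i)\,|\tau(t_i) - \tau(t_{i-1})|
\]
(valid along any monotone segment of a piecewise causal curve by the mean value theorem applied to $\tau'(t) = 1/f_\infty(t)$), together with the monotonicity-based lower bound $\hat{d}_{t, \tilde{g}_j} \ge \hat{d}_{t, g_\infty}$ coming from $\tilde{f}_j \ge f_\infty$.

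The main obstacle is the final $\tau$-to-$t$ conversion, because the extremal piecewise causal curves for $\hat{d}_\tau$ and $\hat{d}_t$ need not coincide, so near-optimality in $L_\tau$ does not automatically transfer to near-optimality in $L_t$. The key property of the null distance that makes the conversion go through is that refining breakpoints within monotone segments of a piecewise causal curve leaves both $L_t$ and $L_\tau$ invariant, so by refining a near-optimal curve for $\hat{d}_{\tau, \tilde{g}_j}$ on a sufficiently fine mesh (using uniform continuity of $f_\infty$ on the compact interval $[t_0, t_1]$), the identity above converts $L_\tau$ bounds into $L_t$ bounds with vanishing multiplicative error, allowing the uniform convergence delivered by Theorem~\ref{thm-MainTheorem tau} to pass from the $\tau$- to the $t$-null distance and completing the sandwich.
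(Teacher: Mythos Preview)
Your overall architecture matches the paper's: modify the sequence so Theorem~\ref{thm-MainTheorem tau} applies (you take $\tilde f_j=\max(f_j,f_\infty)$, the paper takes $(1-\tfrac1j)^{-1}f_j$; both work), sandwich via monotonicity of causal cones, and then convert the resulting $\tau$-convergence to $t$-convergence. Your lower sandwich via $\hat g_j=-dt^2+(1-\tfrac1j)^2f_\infty^2\sigma$ is exactly the paper's Theorem~\ref{thm-LowerDistanceBound}, though the justification is not a ``warped-product formula for $\hat d_t$'' (there is none in general) but rather Theorem~\ref{thm:fuconv}, since $(1-\tfrac1j)f_\infty\to f_\infty$ uniformly.

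The genuine gap is in your $\tau$-to-$t$ conversion. As you yourself note, refining breakpoints within monotone segments leaves both $\hat L_\tau$ and $\hat L_t$ invariant --- so refinement cannot produce new information, and the MVT identity $|\Delta t|=f_\infty(\xi)|\Delta\tau|$ on its own only recovers the bi-Lipschitz bound $(\min f_\infty)\hat L_\tau\le \hat L_t\le(\max f_\infty)\hat L_\tau$. There is no ``vanishing multiplicative error'': the ratio $\hat L_t/\hat L_\tau$ along a curve depends on which $t$-levels the curve visits, and a $\tau$-near-optimal $\tilde g_j$-causal curve is a priori free to wander. What the paper does instead is first restrict to $q\in J^{\pm}_{g_\infty}(p)$, take a $\tau$-almost-minimizing $g_j$-causal curve $\beta_j$, and decompose it into a \emph{forward} part with $\tau$-length exactly $|\tau(p)-\tau(q)|$ and a \emph{backward} part with $\tau$-length $\le\epsilon_j$ (this is where near-optimality is used). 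Boundedness of $f_\infty$ (equivalently the anti-Lipschitz property of $\tau$, or your MVT bound) then forces the backward $t$-length to be $\le C\epsilon_j$, giving $\hat L_t(\beta_j)\le |t(p)-t(q)|+2C\epsilon_j$. General $p,q$ are handled afterwards by the triangle inequality along the breakpoints of a $g_\infty$-optimal piecewise null curve, each consecutive pair of which is causally related. Your sketch contains the right tool (the MVT bound is equivalent to what the paper uses on the backward part) but is missing both the reduction to causally related pairs and the forward/backward decomposition; without those the upper sandwich does not close.
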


This result can also be interpreted as saying that warped product spacetimes with the null distance, which are close from below in the $C^0$ sense, and close from above in the $L^1$ sense, are sequentially stable with respect to the uniform distance. One may think that requiring $f_j$ to converge to $f_{\infty}$ from below is too strong but it was already shown by the author and A. Burtscher \cite{AB} that this was necessary. We review this example in section \ref{sect-Examples} which shows that if there is even just one point which does not converge from below then the sequence will converge to a metric space which is not the null distance of the limiting warped product spacetime. This is one of a family of examples which shows that one will generally need convergence from below in order to guarantee Gromov-Hausdorff convergence. 

The rest of the examples in section \ref{sect-Examples} are for sequences of metrics on $[0,2]\times\mathbb{D}^n$ with the metric $g_j=-dt^2+f_j(r,t)^2 \sigma$ where $\sigma$ is the flat metric on the disk $\mathbb{D}^n$ and $r$ is the radial coordinate from the origin. These examples help to illuminate what the generalization of Theorem \ref{thm-MainTheorem t} should be but they also show how this generalization is not a straight forward adaption of the VADB result of the author, R. Perales, and C. Sormani \cite{Allen-Perales-Sormani} in the Riemannian case. We now describe each of these examples in order to highlight there importance. The author also believes that these examples should provide important tools for comparing and contrasting other possible definitions of Gromov-Hausdorff convergence of spacetimes such as E. Minguzzi and S. Suhr \cite{MS}, and O. M\"{u}ller \cite{M}, definitions of measured Gromov-Hausdorff convergence by M. Braun \cite{B}, and F. Caveletti and A. Mondino \cite{CM}, and  Spacetime Intrinsic Flat convergence of A. Sakovich and C. Sormani \cite{S-Ob,SSFuture}.

 In Example \ref{Ex-ManyBlowUpsTaxi} we see if $f_{\infty}=1 \le f_j$, but $f_j(t)$ blows up along a dense set of times in $[0,2]$ so that $f_j$ is unbounded in $L^1$, then the blow up will drastically effect the sequence and the limiting metric will be a taxi metric space on $[0,2]\times \mathbb{D}^n$. Similarly, in Example \ref{Ex-ManyShortcutTaxi} we see if $f_{\infty}=1 \le f_j \le 2$ and $f_j$ converges to $1$ pointwise along a dense set of points in $[0,2]$, but $L^1$ converges to $2$, then the sequence of null distance metric spaces will not converge to Minkowski space with the null distance. These examples clearly demonstrate the necessity of $L^1$ convergence in Theorem \ref{thm-MainTheorem t}. 

In the next few examples we want to test the effects of different blow up rates of $f_j(r,t)$ on the sequence of null distance metrics for spacetimes where $f_j$ is allowed to depend on $r$. In Example \ref{Ex-BlowUpBubble}, we see that for a precise blow up rate we find bubbling phenomenon along the sequence. In particular, the limiting metric space has $n+1$ Hausdorff dimension throughout but is not a manifold. In Example \ref{Ex-BlowUpSpline}, we see that for a precise blow up rate we find a generalized spline forms along the sequence. In particular, this is another example where the Gromov-Hausdorff limit and Spacetime Intrinsic Flat limit of a sequence of null distance spacetimes do not agree (See Example 5.11 in \cite{AB}). In Example \ref{Ex-BlowupRate}, we see that for blow up rates below the critical rate we find convergence of the sequences of null distance warped products to Minkowski space with the null distance. These examples give direction as to what Theorem \ref{thm-MainTheorem t} should be for Gromov-Hausdorff and Spacetime Intrinsic Flat convergence studied on more general classes of spacetimes.

In section \ref{sect-Background}, we provide the necessary background on the null distance to be able to understand the examples, theorems,  and proofs of this paper. In section \ref{sect-Examples}, we provide many examples which justify the hypotheses of Theorem \ref{thm-MainTheorem tau} and Theorem \ref{thm-MainTheorem t}, as well as provide hints at what the general theorem for Spacetime Intrinsic Flat convergence should be. In section \ref{sect-Proofs}, we provide the proofs of the main theorems.

\textbf{Acknowledgements:} The author would like to thank Piotr T. Chruściel, Melanie Graf,
Michael Kunzinger, Ettore Minguzzi, and Roland Steinbauer, the organizers of the Non-regular Spacetime Geometry workshop at the Erwin Schr\"{o}dinger Institute, for the invitation to participate and speak at this wonderful workshop. While at the workshop the author was reminded of the desire to explore a VADB type theorem for the null distance and this paper would not have happened without it.

\section{Background}\label{sect-Background}

A spacetime is a time oriented Lorentzian manifold $(M,g)$. We define $J^{\pm}_g(p)$ to be the set of all points which are in the causal future or past of $p$, with respect to $g$. Let $\tau:M \rightarrow \R$ be a generalized time function which is a function which is strictly increasing along all future directed causal curves. If $\tau$ is continuous then we say that $\tau$ is a time function. Any past distinguishing spacetime (If $I^-(p)=I^-(q)$ then $p=q$) admits a generalized time function \cite{SV} and any stably causal spacetime admits a time function \cite{Bernal-Sanchez} where a stably causal spacetime is one in which no closed causal curves exist even under small pertubations of the spacetime. From here on we will only study spacetimes $M$ which come equipped with a (generalized) time function $\tau$. 

We now define the null length and null distance introduced by Sormani and Vega \cite{SV}. Let $\beta :[a,b] \rightarrow M$ be a piecewise causal curve, i.e. a piecewise smooth curve that is either future-directed or past-directed causal on its pieces $a=s_0 < s_1 < \ldots < s_k =b$ (see figure \ref{fig:AdmissableCurves}). The null length of $\beta$ is given by
 \begin{align}
  \hat L_{\tau,g} (\beta) = \sum_{i=1}^k |\tau(\beta(s_i))-\tau(\beta(s_{i-1}))|.
 \end{align}
 In the case where $\tau$ is differentiable we can compute the null length of $\beta$ by
 \begin{align}
  \hat L_{\tau,g} (\beta) = \int_a^b |(\tau \circ \beta)'(s)| ds.
 \end{align}
 For any $p,q \in M$, the \emph{null distance} is given by
 \begin{align}
  \hat d_{\tau,g} (p,q) = \inf \{ \hat L_{\tau,g} (\beta) : \beta \text{ is a piecewise causal curve from } p \text{ to } q \}.  
 \end{align}
When the time function is clear from context we will write $\hat{d}_{g}$. One should notice that the null distance can be defined on very weak notions of spacetimes such as the Lorentzian length spaces of  M. Kunzinger and C. S\"{a}mann \cite{KCS}. In particular, the null distance can be studied on generalized cones which has been initiated by M. Kunzinger and R. Steinbauer \cite{KS}.

 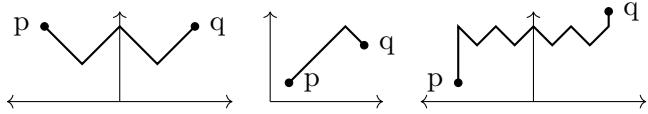
\begin{figure} [h]
  \centering
 \begin{tikzpicture}[scale=1]
  \draw[<->] (-1.5,0) -- (1.5,0) node[anchor=north west]{} ;
  \draw[->] (0,0) -- (0,1.2) node[anchor=south east]{};
  \draw[->] (2,0) -- (3.5,0) node[anchor=north west]{};
  \draw[->] (2,0) -- (2,1.2) node[anchor=south east]{};
    \draw[<->] (4,0) -- (7,0) node[anchor=north west]{};
  \draw[->] (5.5,0) -- (5.5,1.2) node[anchor=south east]{};
  \draw[fill] (-1,1) circle [radius=0.05];
  \node[left, outer sep=2pt] at (-1,1) {p};
  \draw[fill] (1,1) circle [radius=0.05];
  \node[right, outer sep=2pt] at (1,1) {q};
   \draw[ thick] (-1,1) -- (-1/2,1/2) -- (0,1) -- (1/2,1/2) -- (1,1);
   \draw[thick](2+.25,.25)--(2+1,1)-- (2+1.25,.75) ;
    \draw[fill] (2+.25,.25) circle [radius=0.05];
   \node[right, outer sep=2pt] at (2+.25,.25) {p};
  \draw[fill] (2+1.25,.75) circle [radius=0.05];
  \node[right, outer sep=2pt] at (2+1.25,.75) {q};
  \draw[thick](4.5,.25)--(4.5,1)-- (5.5+-3/4,3/4) -- (5.5+-1/2,1) -- (5.5+-1/4,3/4) -- (5.5+0,1)--(5.5+1/4,3/4) -- (5.5+1/2,1) -- (5.5+3/4,3/4)--(6.5,1)--(6.5,1.2); 
    \draw[fill] (4.5,.25) circle [radius=0.05];
   \node[left, outer sep=2pt] at (4.5,.25) {p};
  \draw[fill] (6.5,1.2) circle [radius=0.05];
  \node[right, outer sep=2pt] at (6.5,1.2) {q};
 \end{tikzpicture}
  \caption{Several examples of admissible curves in Minkowski space. When estimating the null distance in warped product spacetimes it will be helpful to construct similar curves.}
  \label{fig:AdmissableCurves}
 \end{figure}

 \subsection{Properties of the Null Distance}

 In this section we review some important properties of the null distance which will be used in this paper. For further properties of null distance see the work by the author and A. Burtscher \cite{AB}, A. Burtscher and L. Garc\'{i}a-Heveling \cite{BG,BG2}, M. Graff and C. Sormani \cite{GS}, M. Kunzinger and R. Steinbauer \cite{KS}, A. Sakovich and C. Sormani \cite{SS}, C. Sormani and C. Vega \cite{SV}, and C. Vega \cite{V}.

 \begin{defn}\label{def-AntiLipschitz}
 Let $(M,g)$ be a spacetime and $f: M \to \R$. We say that $f$ is \emph{anti-Lipschitz} if there exists a distance function $d$ which agrees with the manifold topology so that for any $p,q\in M$ we have
 \begin{align}
 p \le q \Longrightarrow f(q) - f(p) \ge d(p,q).
 \end{align}
\end{defn}

One can observe that the time functions studied in this paper are anti-Lipschitz by Theorem 4.18 of \cite{SV}. Now we recall a foundational result for the null distance given by Sormani and Vega.

\begin{thm}[Theorem 4.6 of \cite{SV}]\label{thm-BasicNullProperties}
 Let $M$ be a spacetime and $\tau$ a time function on $M$. If $\tau$ is locally anti-Lipschitz, then the induced null distance $\hat d_\tau$ is a definite and conformally invariant metric on $M$ that induces the manifold topology.
\end{thm}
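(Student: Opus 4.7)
The plan is to verify the four distinct conclusions in order, starting with the easiest. Non-negativity, symmetry, and the triangle inequality for $\hat{d}_\tau$ are essentially formal consequences of the definition: non-negativity since each summand $|\tau(\beta(s_i))-\tau(\beta(s_{i-1}))|$ is non-negative; symmetry since reversing a piecewise causal curve yields another piecewise causal curve of the same null length; and the triangle inequality by concatenating piecewise causal curves from $p$ to $q$ and from $q$ to $r$. Conformal invariance is equally immediate: the set of piecewise causal curves depends only on the conformal class of $g$, and the null length formula involves only $\tau$ and the values of $\tau$ at the breakpoints, not $g$ itself. Hence $\hat{d}_{\tau, \Omega^2 g} = \hat{d}_{\tau,g}$ for every smooth positive $\Omega$.

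The real content lies in definiteness and in the fact that $\hat{d}_\tau$ induces the manifold topology, and my plan is to derive both from a single local lower bound. First I would prove the following lemma: for each $p \in M$ there exist an open neighborhood $U$ of $p$ and a background distance $d$ on $U$ (namely, the one provided by the anti-Lipschitz hypothesis) such that every piecewise causal curve $\beta$ contained in $U$ running from $x$ to $y$ satisfies $\hat{L}_{\tau,g}(\beta) \ge d(x,y)$. Indeed, on each piece the endpoints $\beta(s_{i-1}), \beta(s_i)$ are causally comparable in one order or the other, so the anti-Lipschitz inequality applied to whichever of the two carries the larger value of $\tau$ gives $|\tau(\beta(s_i))-\tau(\beta(s_{i-1}))| \ge d(\beta(s_{i-1}),\beta(s_i))$; summing over $i$ and applying the triangle inequality for $d$ yields the claimed bound.

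Definiteness then follows at once: if $p \ne q$, shrink $U$ around $p$ so that $q \notin \overline{U}$, let $\delta > 0$ be the $d$-distance from $p$ to $\partial U$, and observe that any piecewise causal curve $\beta$ from $p$ to $q$ must contain an initial subcurve from $p$ to some first exit point $x \in \partial U$, whose null length is at least $d(p,x) \ge \delta$ by the lemma. Hence $\hat{d}_\tau(p,q) \ge \delta > 0$.

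For the topological equivalence, the easier direction is that the identity map from $(M,\text{manifold topology})$ to $(M,\hat{d}_\tau)$ is continuous: near any $p$ one can work in a convex normal neighborhood and join $p$ to a nearby $q$ by a short admissible zigzag of two causal segments, as illustrated in Figure~\ref{fig:AdmissableCurves}, whose null length is controlled by the continuity of $\tau$ and tends to $0$ as $q \to p$ in the manifold sense. The reverse direction uses the local lower bound: if $\hat{d}_\tau(p_n,p) \to 0$ but a subsequence of the $p_n$ stays outside some neighborhood $U$ of $p$, the definiteness argument yields a uniform bound $\hat{d}_\tau(p_n, p) \ge \delta$, a contradiction. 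The main obstacle I anticipate is the construction of those short admissible curves in the easy direction: one must check that, under only the hypotheses built into being a locally anti-Lipschitz time function, nearby points can always be joined by piecewise causal curves of arbitrarily small null length, which requires a careful analysis of the local cone structure together with uniform continuity of $\tau$ on compact sets.
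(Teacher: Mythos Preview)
The paper does not contain a proof of this statement: it is quoted verbatim as Theorem~4.6 of Sormani--Vega \cite{SV} in the background section and used as a black box thereafter. So there is no ``paper's own proof'' to compare your proposal against.

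That said, your outline is essentially the argument one finds in \cite{SV}. The formal pseudometric properties and conformal invariance are exactly as you say. Your key lemma---that on a neighborhood $U$ where the anti-Lipschitz inequality holds with background distance $d$, every piecewise causal curve in $U$ has null length at least the $d$-distance between its endpoints---is the heart of the matter, and your derivation (apply the anti-Lipschitz bound on each causal piece, then sum and use the triangle inequality for $d$) is correct. Definiteness and the ``hard'' direction of the topological equivalence then follow just as you describe, via a first-exit argument. For the ``easy'' direction you correctly flag the only nontrivial point: one must produce short admissible zigzags joining nearby points, which in \cite{SV} is handled using local causal structure (cylindrical or convex normal neighborhoods) together with continuity of $\tau$. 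One small omission in your sketch is that you should also note the admissible class is nonempty for every pair $p,q$ (so that $\hat d_\tau < \infty$), which follows from connectedness of $M$ once the local zigzag construction is in hand.
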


In Theorem 3.25 of \cite{SV}, C. Sormani and C. Vega also directly check that warped products (defined on $I \times \Sigma$ where $(\Sigma,\sigma)$ is a complete Riemannian manifold) with differentiable and strictly increasing time functions induce definite null distance metrics which encode causality. It was verified by the author and Burtscher that $(M,\hat{d}_{\tau})$ is a length space in the sense of Burago, Burago, Ivanov \cite{BBI} with counterexamples given in the case where $\tau$ is not continuous or $M$ is incomplete. We also note that the conformal invariance of the null distance can be exploited for the sake of estimates and it is just this technique which is used to prove Theorem \ref{thm-MainTheorem tau}. This argument requires one to choose a time function for the sequence which is not the $t$-coordinate. Hence, in the proof of Theorem \ref{thm-MainTheorem t} we need to estimate the relationship between a sequence of null distance spacetimes with different time functions. Part of the argument uses a relationship between null distances with different time functions established by the author and A. Burtscher \cite{AB}.

 \begin{lem}[Lemma 2.1 of \cite{AB}]\label{lem:timeequiv}
 Let $I$ be a closed interval, $(\Sigma,\sigma)$ be a compact Riemannian manifold and $(I \times \Sigma, g=-dt^2+f(t)^2\sigma)$ a warped product spacetime. If $\tau$ is a time function such that $\tau(t,x) = \phi(t)$ with $\phi'>0$ then there exists a constant $C>0$ such that
 \[
  \frac{1}{C} \hat d_{t,g}(p,q) \leq \hat d_{\tau,g} (p,q) \leq C \hat d_{t,g}(p,q), \qquad p,q \in M.
 \]
\end{lem}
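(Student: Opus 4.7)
The plan is to reduce the stated inequality to a pointwise comparison of null lengths of piecewise causal curves, and then take infima over all such curves joining $p$ and $q$. The key observation is that the set of admissible curves used to define the null distance depends only on the causal structure of $g$, and not on the choice of time function, so both infima are taken over the same family.

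First I would exploit the compactness of $I$. Since $\phi: I \to \R$ is $C^1$ with $\phi' > 0$ on the compact interval $I$, the derivative $\phi'$ attains its maximum $M$ and its positive minimum $m$. These yield the uniform bound
\[
0 < m \leq \phi'(t) \leq M < \infty, \qquad t \in I.
\]
I will ultimately take $C = \max(M, 1/m)$.

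Next I compare the null lengths of an arbitrary admissible curve. Let $\beta: [a,b] \to M$ be a piecewise causal curve from $p$ to $q$ with breakpoints $a = s_0 < s_1 < \cdots < s_k = b$. On each piece $\beta|_{[s_{i-1},s_i]}$, the curve is either future- or past-directed causal for $g$, so $t \circ \beta$ is monotone on $[s_{i-1},s_i]$, and because $\phi$ is strictly increasing, so is $\phi \circ t \circ \beta = \tau \circ \beta$ (with the same direction). By the mean value theorem, there exists $\xi_i \in I$ with
\[
|\tau(\beta(s_i)) - \tau(\beta(s_{i-1}))| = \phi'(\xi_i) \, |t(\beta(s_i)) - t(\beta(s_{i-1}))|.
\]
Summing over $i$ and inserting the bounds on $\phi'$ from the previous step gives
\[
m \, \hat L_{t,g}(\beta) \;\leq\; \hat L_{\tau,g}(\beta) \;\leq\; M \, \hat L_{t,g}(\beta).
\]

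Finally, taking the infimum over all piecewise causal curves from $p$ to $q$ yields
\[
m \, \hat d_{t,g}(p,q) \;\leq\; \hat d_{\tau,g}(p,q) \;\leq\; M \, \hat d_{t,g}(p,q),
\]
which rearranges to the claimed two-sided bound with $C = \max(M, 1/m)$. There is really no substantial obstacle here: the proof is essentially the mean value theorem applied piecewise together with compactness of $I$. The only minor point worth verifying explicitly is that the family of piecewise causal curves is the same for both time functions (so that the infima really are taken over the same set), which is immediate since causality is a property of $g$ alone.
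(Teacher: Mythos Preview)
Your proof is correct. The paper does not actually prove this lemma; it is quoted verbatim as Lemma~2.1 of \cite{AB} and used as a black box, so there is no in-paper argument to compare against. Your approach---bounding $\phi'$ above and below on the compact interval $I$, applying the mean value theorem on each causal piece to compare $\hat L_{t,g}(\beta)$ and $\hat L_{\tau,g}(\beta)$, and then passing to the infimum over the common admissible class---is exactly the natural argument and is essentially what appears in the cited source. One small caveat: you invoke the extreme value theorem for $\phi'$, which presupposes $\phi \in C^1$; the lemma as stated only asserts $\phi'>0$, so strictly speaking you are assuming slightly more regularity than is written, though this is the intended hypothesis in practice.
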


\subsection{Cosmological time functions}

 Given the freedom to choose a time function in the definition of the null distance it is natural to ask what a canonical choice of time function should be? C. Sormani and C. Vega \cite{SV} suggested, and A. Sakovich and C. Sormani \cite{SS}, and A. Burtscher and L. Garc\'{i}a-Heveling \cite{BG2} recently showed, that the cosmological time function is a natural choice of time function when defining the null distance on a spacetime. Anderson, Galloway, and Howard \cite{AGH} defined the cosmological time function of a Lorentzian manifold $(M,g)$ to be
 \begin{align}
     \tau_{AGH}(p)=\sup\{L_g(\gamma):\text{ future timelike } \gamma:[0,1]\rightarrow N,  \gamma(1)=p\},
 \end{align}
 where the Lorentzian length is defined by
 \begin{align}
     L_g(\gamma)=\int_0^1\sqrt{|g(\gamma',\gamma')|}ds.
 \end{align}
Anderson, Galloway, and Howard say that the time function is regular if $\tau_{AGH}<\infty$ and $\tau_{AGH} \rightarrow 0$ along every past inextensible causal curve. We now show that for every spacetime in this paper the $t$-coordinate will be the cosmological time function. This is a well known fact in the case of warped products but we provide a simple proof here for a slightly more general class of spacetimes.
 
 \begin{lem}\label{lem-tIsCTF}
   Let $(M=[t_0,t_1]\times \Sigma,g=-dt^2+f(t,x)^2\sigma)$ be a Lorentzian manifold where $(\Sigma,\sigma)$ is a compact, connected Riemannian manifold, $x \in \Sigma$, $f: [t_0,t_1]\times \Sigma \rightarrow (0,\infty)$, then $t$ is the cosmological time function.  
 \end{lem}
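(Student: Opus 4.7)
The plan is to compute $\tau_{AGH}(p)$ directly for a typical point $p = (t_p, x_p) \in M$ and show it equals $t_p - t_0$ (so that $t$ and the cosmological time function differ only by the additive constant $t_0$). The proof splits naturally into an upper bound obtained from the Lorentzian length formula and a matching lower bound achieved by an explicit family of vertical curves.

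For the upper bound, I would take an arbitrary future timelike curve $\gamma:[0,1]\to M$ with $\gamma(1)=p$, write $\gamma(s) = (t(s), x(s))$, and note that the future timelike condition $-t'(s)^2 + f(t(s),x(s))^2|x'(s)|_\sigma^2 < 0$ with $t'(s)>0$ forces
\begin{align}
L_g(\gamma) = \int_0^1 \sqrt{\,t'(s)^2 - f(t(s),x(s))^2 |x'(s)|_\sigma^2\,}\,ds \;\le\; \int_0^1 t'(s)\,ds \;=\; t_p - t(0) \;\le\; t_p - t_0,
\end{align}
since $t(0)\ge t_0$ on $M$. Taking the supremum gives $\tau_{AGH}(p)\le t_p-t_0$.

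For the lower bound, I would exhibit for each small $\epsilon>0$ the vertical curve $\gamma_\epsilon(s) = \bigl(t_0 + \epsilon + s(t_p - t_0 - \epsilon),\, x_p\bigr)$. Then $\gamma_\epsilon'(s) = (t_p - t_0 - \epsilon)\partial_t$, so $g(\gamma_\epsilon', \gamma_\epsilon') = -(t_p-t_0-\epsilon)^2 < 0$ and $\gamma_\epsilon$ is future timelike with $L_g(\gamma_\epsilon) = t_p - t_0 - \epsilon$. Letting $\epsilon\to 0$ yields $\tau_{AGH}(p) \ge t_p - t_0$, which combined with the upper bound shows $\tau_{AGH}(p) = t_p - t_0$. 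Regularity in the sense of Anderson--Galloway--Howard follows immediately: $\tau_{AGH}\le t_1-t_0<\infty$, and any past inextendible causal curve has $t$-coordinate strictly decreasing and bounded below by $t_0$, so it must approach $t_0$, forcing $\tau_{AGH}\to 0$ along it.

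There is no real obstacle here; the only care needed is in the upper bound, where one uses the pointwise inequality $\sqrt{t'^2 - f^2|x'|_\sigma^2}\le t'$ valid precisely because $\gamma$ is future timelike, and in the regularity argument, where one must confirm that past inextendibility on the manifold-with-boundary $[t_0,t_1]\times\Sigma$ indeed corresponds to the $t$-coordinate approaching $t_0$. Since adding the constant $-t_0$ to a time function has no effect on the null length or null distance, identifying $t$ with the cosmological time function in the statement of the lemma is justified.
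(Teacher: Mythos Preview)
Your proposal is correct and follows essentially the same approach as the paper: bound the Lorentzian length of an arbitrary future timelike curve by the change in the $t$-coordinate, and realize the supremum via the vertical curve. The paper streamlines the upper bound by first reparametrizing every timelike curve by its $t$-coordinate, while you instead use the pointwise inequality $\sqrt{t'^2 - f^2|x'|_\sigma^2}\le t'$ directly; your explicit $\epsilon$-approximation for the lower bound and the regularity discussion are a bit more careful than the paper, which simply asserts that the vertical curve $\gamma(s)=(s,p_\Sigma)$ achieves the supremum.
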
 
 \begin{proof}
     Consider a curve $\gamma:[t_0,t_1]\rightarrow M$, $t_0<t_1$, $\gamma(s)=(s,\alpha(s))$ and note that it will be timelike if
 \begin{align}
     g(\gamma',\gamma')< 0 \quad \Rightarrow \quad-1+f^2|\alpha'|_{\sigma}^2<0 \quad \Rightarrow \quad |\alpha'|_{\sigma}^2 < \frac{1}{f^2}.
 \end{align}

 We note that for any other timelike curve $\bar{\gamma}:[\bar{s}_0,\bar{s}_1] \rightarrow M$, $\bar{\gamma}(\bar{s})=(h(\bar{s}),\bar{\alpha}(\bar{s}))$ we know $h'\not =0$ and hence we can re-scale time so that $s(\bar{s})=\int_{\bar{s}_0}^{\bar{s}} h'(\tau)d \tau$ and $\gamma(s)=(s,\alpha(s))$ where $\alpha(s)=\bar{\alpha}(s^{-1}(s))$.

 If we calculate the length of $\gamma$ we find
 \begin{align}
     L_g(\gamma)=\int_{t_0}^{t_1} \sqrt{1-f^2|\alpha'|_{\sigma}^2}ds \le t_1-t_0.
 \end{align}
 In particular, we see that the supremum in the definition of the cosmological time function is achieved by $\gamma(s)=(s,p_{\Sigma})$ and hence $t$ is the cosmological time function. 
 \end{proof} 
 
 This shows that the cosmological time function for every spacetime considered in this paper is the $t$-coordinate. Since the null distance is invariant under shifts in time, i.e. $\hat{d}_{\tau_1,g}=\hat{d}_{\tau_2,g}$ if $\tau_1=\tau_2+C$ (See Lemma 3.17 in \cite{SV}), we see that $t-t_0$ is a regular cosmological time function. See \cite{AGH} for properties of spacetimes with regular cosmological time functions as well as examples where the cosmological time function is not the $t$-coordinate.

\subsection{Null distance on warped products}

When considering a product $M=[t_0,t_1]\times \Sigma$ and a point $p \in M$ we will denote the time coordinate of $p$ by $t(p)$ and the spatial coordinate of $p$ by $p_{\Sigma}$ throughout the paper. If $(\Sigma,\sigma)$ is a Riemannian manifold then the associated distance function will be denoted $d_{\sigma}$. It is useful then to have an explicit formula for the null distance of Minkowski space, first given by C. Sormani and C. Vega \cite{SV} and generalized to Minkowski products by the author and A. Burtscher \cite{AB}.

\begin{lem}[Proposition 3.3 of \cite{SV}, Lemma 4.4 of \cite{AB}]\label{lem:nulld1}
 Let $(\Sigma,\sigma)$ be a connected continuous Riemannian manifold and the Lorentzian product $(M=[t_0,t_1]\times \Sigma,g=-dt^2+\sigma)$. Then the null distance $\hat d_{t,g}$ satisfies
 \begin{align}\label{d:genMink}
  \hat d_{t,g} (p,q) = 
  \begin{cases}
                           |t(p) - t(q)| & q \in J^\pm_g(p), \\
                           d_\sigma(p_\Sigma,q_\Sigma) & q \not\in J^\pm_g(p).
\end{cases}
 \end{align}

\end{lem}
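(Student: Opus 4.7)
The plan is to verify the two-case formula by proving the uniform statement $\hat{d}_{t,g}(p,q)=\max(|t(p)-t(q)|, d_\sigma(p_\Sigma,q_\Sigma))$ for all $p,q\in M$ via matching lower and upper bounds, and then noting that $q\in J^\pm_g(p)$ holds precisely when $|t(p)-t(q)|\ge d_\sigma(p_\Sigma,q_\Sigma)$, which selects between the two cases of the formula.

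For the lower bound, I fix any piecewise causal curve $\beta:[a,b]\to M$ from $p$ to $q$ with partition $a=s_0<\cdots<s_k=b$. A telescoping triangle inequality on the sum defining $\hat{L}_{t,g}$ gives $\hat{L}_{t,g}(\beta)\ge|t(p)-t(q)|$. Writing each piece as $s\mapsto(h_i(s),\alpha_i(s))$, the causal condition $-(h_i')^2+|\alpha_i'|_\sigma^2\le 0$ implies $|h_i'|\ge|\alpha_i'|_\sigma$ pointwise. Since $h_i$ is monotone on each piece, integrating yields $|t(\beta(s_i))-t(\beta(s_{i-1}))|\ge d_\sigma(\alpha_i(s_{i-1}),\alpha_i(s_i))$; summing and applying the triangle inequality in $(\Sigma,d_\sigma)$ gives $\hat{L}_{t,g}(\beta)\ge d_\sigma(p_\Sigma,q_\Sigma)$. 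Taking the infimum, $\hat{d}_{t,g}(p,q)\ge\max(|t(p)-t(q)|, d_\sigma(p_\Sigma,q_\Sigma))$.

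For the upper bound, if $q\in J^\pm_g(p)$ then a causal curve realizing the relation has null length exactly $|t(p)-t(q)|$ by monotonicity of $t$, which combined with the lower bound yields the first branch. If $q\notin J^\pm_g(p)$, the same lower bound argument applied to any putative causal curve shows $|t(p)-t(q)|<d_\sigma(p_\Sigma,q_\Sigma)$. Given $\varepsilon>0$, I pick a piecewise smooth unit-$\sigma$-speed path $\alpha:[0,L]\to\Sigma$ from $p_\Sigma$ to $q_\Sigma$ with $L<d_\sigma(p_\Sigma,q_\Sigma)+\varepsilon$. Since $|t(p)-t(q)|\le L$, I construct a piecewise linear $h:[0,L]\to[t_0,t_1]$ with $h(0)=t(p)$, $h(L)=t(q)$, slope $\pm 1$ almost everywhere, and small enough zigzag amplitude that its range stays in $[t_0,t_1]$. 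Then $\gamma(s)=(h(s),\alpha(s))$ is piecewise null, hence piecewise causal, and $\hat{L}_{t,g}(\gamma)=\int_0^L|h'(s)|\,ds=L<d_\sigma(p_\Sigma,q_\Sigma)+\varepsilon$. Letting $\varepsilon\to 0$ completes the upper bound.

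The delicate point is the zigzag construction: one must simultaneously hit the prescribed time endpoints, keep the curve inside $[t_0,t_1]$, and keep the spatial projection length-minimizing to within $\varepsilon$, all while ensuring $\gamma$ qualifies as piecewise smooth in the sense required by the null length definition. Since $(\Sigma,\sigma)$ is only assumed continuous, one may first need to approximate the distance-realizing curve in $\Sigma$ by a piecewise smooth path of essentially the same $\sigma$-length, after which the remaining combinatorics of choosing the sign-change points for $h$ are elementary.
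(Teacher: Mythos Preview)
Your argument is correct, and there is nothing in the present paper to compare it against: the lemma is quoted from \cite{SV} and \cite{AB} without proof. Your route via the closed formula $\hat d_{t,g}(p,q)=\max\bigl(|t(p)-t(q)|,\,d_\sigma(p_\Sigma,q_\Sigma)\bigr)$, proved by a lower bound from the causal condition and an upper bound from a zigzag construction, is precisely the standard proof in those references.

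One small caveat worth tightening: the biconditional ``$q\in J^\pm_g(p)$ holds precisely when $|t(p)-t(q)|\ge d_\sigma(p_\Sigma,q_\Sigma)$'' is not quite justified in the borderline case $|t(p)-t(q)|=d_\sigma(p_\Sigma,q_\Sigma)$ when $(\Sigma,\sigma)$ is merely continuous and possibly incomplete, since a length-minimizing curve in $\Sigma$ need not exist. This does no damage to the result, because on that borderline the two branches of the formula coincide, and your zigzag upper bound only uses the non-strict inequality $|t(p)-t(q)|\le L$ with $L>d_\sigma(p_\Sigma,q_\Sigma)$. It would be cleaner to prove the $\max$ formula outright and only afterwards observe that it specializes to the two displayed cases, rather than lean on the exact characterization of $J^\pm_g(p)$.
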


Now that we have the explicit expression for the null distance on Lorentzian products the author and Burtscher proved bi-Lipschitz bounds for warped products with bounded warping functions. The analagous result was shown for generalized cones by Kunzinger and Steinbaeur \cite{KS}.

\begin{thm}[Proposition 4.10 of \cite{AB}]\label{thm:biLip}
 Let $(\Sigma,\sigma)$ be a connected, continuous Riemannian manifold. Let $f$ be bounded on  $[t_0,t_1]$  such that
\begin{align}
   0< f_{\min} \leq f(t) \leq f_{\max} < \infty, \qquad t\in [t_0,t_1].
  \end{align}
  Consider $M=[t_0,t_1]\times \Sigma$ and the Lorentzian metrics $g_f=-dt^2+f(t)^2\sigma$ and $g=-dt^2+\sigma$.
 Then for all $p,q \in M$
 \begin{align}\label{dfestimate}
 \min \{1, f_{\min} \} \, \hat d_{t,g} (p,q) \leq \hat d_{t,g_f} (p,q) \leq  \max \{1,
  f_{\max} \} \, \hat d_{t,g} (p,q).
 \end{align}
\end{thm}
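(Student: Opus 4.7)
The plan is to exploit the observation that, for any piecewise causal curve $\beta$, the null length $\hat L_{t,g_f}(\beta) = \sum_i |t(\beta(s_i)) - t(\beta(s_{i-1}))|$ depends only on the time function $t$ and on the curve itself, not on the metric $g_f$. Hence $\hat d_{t,g_f}$ and $\hat d_{t,g}$ differ only in which piecewise causal curves are admissible in the infimum, and both directions of the estimate reduce to comparing these two classes.

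For the lower bound I take any piecewise $g_f$-causal curve $\beta$ from $p$ to $q$. On each causal piece $|t'(s)| \geq f(t(s))\,|\beta_\Sigma'(s)|_\sigma \geq f_{\min}|\beta_\Sigma'(s)|_\sigma$, which after integration and summation gives $f_{\min}\, d_\sigma(p_\Sigma,q_\Sigma) \leq \hat L_{t,g_f}(\beta)$, while the triangle inequality in $\R$ gives $|t(p)-t(q)| \leq \hat L_{t,g_f}(\beta)$. Combining these yields $\min\{1,f_{\min}\}\max\{|t(p)-t(q)|, d_\sigma(p_\Sigma,q_\Sigma)\} \leq \hat L_{t,g_f}(\beta)$, and by \lemref{lem:nulld1} the left side is $\min\{1,f_{\min}\}\,\hat d_{t,g}(p,q)$. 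Taking the infimum over $\beta$ then completes the lower bound.

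For the upper bound I explicitly build piecewise $g_f$-causal curves from $p$ to $q$ guided by a Lipschitz $\sigma$-arc length parametrized curve $\alpha\colon[0,L]\to\Sigma$ from $p_\Sigma$ to $q_\Sigma$ with $L \leq d_\sigma(p_\Sigma,q_\Sigma)+\epsilon$; WLOG $t(p) \leq t(q)$. If $f_{\max} \leq 1$ every piecewise $g$-causal curve is automatically $g_f$-causal, so the $\hat d_{t,g_f}$-infimum is over a larger class and $\hat d_{t,g_f}(p,q) \leq \hat d_{t,g}(p,q)$. If $f_{\max} > 1$ I split into two subcases. When $f_{\max}L \leq t(q)-t(p)$, the monotone curve $\gamma(s) = (t(p)+(t(q)-t(p))s/L,\,\alpha(s))$ has $|t'| \geq f_{\max} \geq f$ and is $g_f$-causal with null length $t(q)-t(p) = \hat d_{t,g}(p,q)$. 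Otherwise $|t(q)-t(p)| < f_{\max}L$, and I solve $a+b=L$, $f_{\max}(a-b) = t(q)-t(p)$ for unique $a,b \geq 0$ and construct a two-piece zigzag along $\alpha$ with $t' = f_{\max}$ on the first piece of arc length $a$ and $t' = -f_{\max}$ on the second piece of arc length $b$. Each piece satisfies $|t'| = f_{\max} \geq f$, hence is $g_f$-causal; the zigzag terminates at $q$ and has null length $f_{\max}L \leq f_{\max}d_\sigma(p_\Sigma,q_\Sigma) + f_{\max}\epsilon \leq f_{\max}\,\hat d_{t,g}(p,q) + f_{\max}\epsilon$. Letting $\epsilon \to 0$ finishes the upper bound.

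The hardest part is orchestrating the case analysis for the upper bound: the two-piece zigzag is available precisely when $|t(q)-t(p)| \leq f_{\max}L$, since otherwise one of $a,b$ is forced to be negative, and the complementary ``very timelike'' regime requires the monotone construction instead. Dovetailing these two constructions is where the sharp constant $\max\{1,f_{\max}\}$ emerges naturally. A minor wrinkle is that $(\Sigma,\sigma)$ is only continuous, so genuine minimizing $\sigma$-geodesics need not exist; Lipschitz curves of length $\leq d_\sigma(p_\Sigma,q_\Sigma) + \epsilon$ are enough for each construction and the $\epsilon \to 0$ limit closes the argument.
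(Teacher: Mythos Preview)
The paper does not supply its own proof here; the statement is quoted as Proposition~4.10 of \cite{AB}. Your overall strategy is the natural one and is essentially correct: the null length $\hat L_{t,\cdot}(\beta)$ depends only on the time function and the curve, so $\hat d_{t,g_f}$ and $\hat d_{t,g}$ differ only in their admissible classes, and both inequalities reduce to comparing $g_f$-causality with $g$-causality via $f_{\min}\le f\le f_{\max}$. Your lower bound argument is clean, and the identification $\hat d_{t,g}(p,q)=\max\{|t(p)-t(q)|,d_\sigma(p_\Sigma,q_\Sigma)\}$ from \lemref{lem:nulld1} is exactly what is needed.

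There is, however, one genuine gap in the upper bound. Your two-piece zigzag has its apex at time $t(p)+f_{\max}a$, and nothing guarantees this lies in $[t_0,t_1]$. For instance, if $t(p)=t(q)=t_1$ and $L>0$ then $a=b=L/2$ and the apex is at $t_1+f_{\max}L/2>t_1$, so the curve leaves $M=[t_0,t_1]\times\Sigma$. The fix is routine but must be made: replace the single up--down by a multi-piece zigzag with $t'=\pm f_{\max}$ on each piece, reflecting off $\{t=t_1\}$ and $\{t=t_0\}$ as necessary while traversing $\alpha$ from $0$ to $L$, and adjust the final reflection so the curve ends at $(t(q),q_\Sigma)$. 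Each piece is still $g_f$-causal and the total null length is still $f_{\max}L$, so your estimate survives unchanged. (A minor additional point: in the subcase $f_{\max}L\le t(q)-t(p)$ your formula for $\gamma$ is ill-defined when $L=0$; in that degenerate case simply take the vertical segment $s\mapsto(s,p_\Sigma)$.)
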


The author and Burtscher then showed that warped product spacetimes, and more generally globally hyperbolic spacetimes, are integral current spaces using these bi-Lipschitz estimates. In addition, they used these bi-Lipschitz estimates for a sequence of warped products with uniformly bounded warping functions to show that a subsequence of the corresponding null distances must converge by the Arzel\'{a}-Ascoli theorem. This type of argument will also by used in the proof of Theorem \ref{thm-MainTheorem t} of this paper.

\subsection{Uniform Convergence of Metric Spaces}
Given two metric spaces $(X,d_1)$ and $(X,d_2)$, defined on the same set $X$, we can define the uniform distance between them to be
\begin{align}\label{def-UniformDist}
    d_{unif}(d_0,d_1)=\sup_{x_1,x_2\in X}|d_1(x_1,x_2)-d_2(x_1,x_2)|
\end{align}
Given a sequence of metric spaces $(X,d_j)$ we can define the uniform convergence of $d_j$ to a limiting metric space $(X,d_{\infty})$ by
\begin{align}\label{def-UniformConvergence}
    d_{unif}(d_j,d_{\infty}) \rightarrow 0.
\end{align}

As a sanity check for the null distance and convergence, the author and A. Burtscher showed that if one assumed uniform convergence of a sequence of warping functions then one can conclude uniform convergence for the corresponding sequence of null distance functions. In the Riemannian case this is more or less immediate but it is not quite so straight forward in the null distance case since there isn't as direct a link between the warping function and the null lengths of curves as in the Riemannian case. Nonetheless, we were able to establish this expected result. It is the goal of this paper in part to give the optimal version of this theorem for sequences of warped products.

\begin{thm}[Theorem 5.1 of \cite{AB}]\label{thm:fuconv}
 Let $I$ be a closed interval and $(\Sigma,\sigma)$ be a connected compact Riemannian manifold. Suppose $f_j$ is a sequence of continuous functions $f_j \colon I \to (0,\infty)$ (uniformly bounded away from $0$) and $M_j = I \times_{f_j} \Sigma$ are warped products with Lorentzian metric tensors
 \[
  g_j = -dt^2 + f_j(t)^2 \sigma.
 \]
 Assume that $f_j$ converges uniformly to a limit function
 \[ f_\infty(t) = \lim_{j\to\infty} f_j(t). \]
 Then the corresponding null distances $\hat d_j$ of $g_j$, $j \in \N$, with respect to the canonical time function $\tau(t,x)=t$, converge uniformly to $\hat d_\infty$ on $M$.
\end{thm}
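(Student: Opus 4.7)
The plan is first to secure uniform equicontinuity of the family $\{\hat d_j\}$ via the bi-Lipschitz bounds of Theorem \ref{thm:biLip}, and then to prove pointwise convergence by modifying near-optimal causal curves. Since $f_j\to f_\infty$ uniformly and $f_\infty>0$ is continuous on the compact interval $I$, there exist $0<m\le M<\infty$ with $m\le f_j(t)\le M$ for every $j$ and every $t$. Setting $g_0=-dt^2+\sigma$, Theorem \ref{thm:biLip} applied uniformly in $j$ yields
\[
 \min(1,m)\,\hat d_{t,g_0}(p,q)\;\le\;\hat d_j(p,q)\;\le\;\max(1,M)\,\hat d_{t,g_0}(p,q),
\]
so $\{\hat d_j\}$ is uniformly bounded on the compact space $M\times M$ and equi-Lipschitz with respect to $\hat d_{t,g_0}$. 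It therefore suffices to establish pointwise convergence $\hat d_j(p,q)\to\hat d_\infty(p,q)$.

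For the pointwise step, fix $p,q\in M$ and $\eta>0$, and choose $\varepsilon_j\to 0$ such that $(1-\varepsilon_j)f_\infty(t)\le f_j(t)\le(1+\varepsilon_j)f_\infty(t)$. Pick a piecewise $g_\infty$-causal curve $\beta$ from $p$ to $q$ with $\hat L_{t,g_\infty}(\beta)\le\hat d_\infty(p,q)+\eta$, and decompose $\beta$ into monotone causal pieces $\beta_i(t)=(t,\alpha_i(t))$ running between $p_{i-1}$ and $p_i$, so $|\alpha_i'(t)|_\sigma f_\infty(t)\le 1$. Let $\bar\alpha_i\colon[0,L_i]\to\Sigma$ be the arc-length reparametrization of the image of $\alpha_i$ and define
\[
 \tilde l_i(t)=\int_{t(p_{i-1})}^{t}|\alpha_i'(r)|_\sigma\min\!\left(\tfrac{f_\infty(r)}{f_j(r)},1\right)dr,\qquad \tilde\alpha_i(t)=\bar\alpha_i(\tilde l_i(t)).
\]
Then $\tilde\beta_i(t)=(t,\tilde\alpha_i(t))$ stays inside the image of $\alpha_i$ and is $g_j$-causal, since
$|\tilde\alpha_i'(t)|_\sigma f_j(t)=|\alpha_i'(t)|_\sigma\min(f_\infty(t),f_j(t))\le |\alpha_i'(t)|_\sigma f_\infty(t)\le 1$. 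The endpoint $\tilde\alpha_i(t(p_i))$ differs from $(p_i)_\Sigma$ by $d_\sigma$-distance at most $\int|\alpha_i'|_\sigma(1-f_\infty/f_j)_+\,dt\le C_1\varepsilon_j|t(p_i)-t(p_{i-1})|$, using $(1-f_\infty/f_j)_+\le 2\varepsilon_j$ and $|\alpha_i'|_\sigma\le 1/m$. This spatial gap is closed by appending a short $g_j$-piecewise-causal zigzag between $(t(p_i),\tilde\alpha_i(t(p_i)))$ and $p_i$, whose null length is bounded by $\max(1,M)\cdot C_1\varepsilon_j|t(p_i)-t(p_{i-1})|$ again via Theorem \ref{thm:biLip} together with Lemma \ref{lem:nulld1}. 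Summing over pieces yields $\hat d_j(p,q)\le(1+C_3\varepsilon_j)(\hat d_\infty(p,q)+\eta)$.

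A symmetric construction (swapping the roles of $g_j$ and $g_\infty$, using the reverse relation $(1-\varepsilon_j')f_j\le f_\infty\le(1+\varepsilon_j')f_j$ with $\varepsilon_j'\to 0$) gives $\hat d_\infty(p,q)\le(1+C_3\varepsilon_j')(\hat d_j(p,q)+\eta)$. Sending $\eta\to 0$ and invoking the uniform diameter bound from the first step, we obtain $|\hat d_j-\hat d_\infty|\le C\max(\varepsilon_j,\varepsilon_j')\cdot\diam_{g_0}(M)\to 0$ uniformly on $M\times M$. The main technical obstacle is the curve modification itself: the tempting strategies of dilating or translating the time coordinate fail, because they would force $f_j$ and $f_\infty$ to be evaluated at different times, introducing an uncontrolled modulus-of-continuity error for $f_\infty$. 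Slowing only the spatial traversal of the image of $\alpha_i$ by a factor of at most $1$ keeps the comparison of $f_j$ and $f_\infty$ pointwise in $t$, and reduces the entire cost to an endpoint correction of size $O(\varepsilon_j)$ which Theorem \ref{thm:biLip} converts into a null length of the same order.
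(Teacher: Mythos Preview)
The paper does not supply its own proof of this statement: Theorem~\ref{thm:fuconv} is quoted verbatim from \cite{AB} as background and is invoked (in the proof of Theorem~\ref{thm-LowerDistanceBound}) without argument. So there is no in-paper proof to compare against.

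Your argument is correct and self-contained. The curve-modification step is the heart of it, and your choice to slow the \emph{spatial} traversal by the factor $\min(f_\infty/f_j,1)$ while keeping $t$ fixed is exactly what makes the comparison clean: causality with respect to $g_j$ is immediate, and the only price is an endpoint defect on $\Sigma$ of size $O(\varepsilon_j)\cdot|t(p_i)-t(p_{i-1})|$, which Theorem~\ref{thm:biLip} together with Lemma~\ref{lem:nulld1} converts back into null length of the same order. Because these corrections are proportional to the null length of each piece, they sum to a multiplicative error $(1+C\varepsilon_j)$ regardless of how many pieces $\beta$ has, so you get a uniform estimate directly and the preliminary equicontinuity step is in fact not needed (though it does no harm). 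Two cosmetic points: the phrase ``arc-length reparametrization of the image of $\alpha_i$'' should just read ``arc-length reparametrization of $\alpha_i$'' (the image may fail to be a curve if $\alpha_i$ is not injective), and you should remark that the closing zigzag can be taken inside $I\times\Sigma$ even when $t(p_i)$ lies at an endpoint of $I$, since one can zigzag into the interior.

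It may interest you that the paper's proof of the stronger Theorem~\ref{thm-MainTheorem tau} uses a complementary device: there the author first exploits conformal invariance to reduce to a Minkowski target, and then approximates a $g_\infty$-null curve $\bar\beta(s)=(s,\alpha(s))$ by a $g_j$-null curve $\bar\beta_j(s)=(h_j(s),\alpha(s))$ that modifies the \emph{time} component instead of the spatial one, with the endpoint correction now living in the $t$-direction. Your spatial-slowing trick avoids the conformal reduction entirely and works directly on the original warped products, which is arguably more transparent in the uniform-convergence setting.
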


The author and Burtscher also give the first example of a sequence of spacetimes whose Gromov-Hausdorff limits and Sormani-Wenger intrinsic flat limits disagree. Although this is not our focus here, Example \ref{Ex-BlowUpSpline} will give another example where the Gromov-Hausdorff limits and Sormani-Wenger intrinsic flat limits will disagree since the generalized spline in the limit will be of a lower dimension than the sequence.

\subsection{Gromov-Hausdorff Convergence}

In this section we recall a definition of the Gromov-Hausdorff distance between metric spaces. We note that this is not the usual first definition given but it seems like the right definition to give to understand this paper. The reader should consult chapter 7 of D. Burago, Y. Burago, S. Ivanov \cite{BBI} for a more comprehensive discussion of Gromov-Hausdorff distance.

\begin{defn}
    Let $X_1, X_2$ be two sets and define a \textbf{correspondence} between $X_1$ and $X_2$ to be a $\mathcal{R}\subset X_1 \times X_2$ so that: $\forall x_1 \in X_1$, $\exists x_2 \in X_2$ so that $(x_1,x_2) \in \mathcal{R}$ and $\forall x_2 \in X_2$, $\exists x_1 \in X_1$ so that $(x_1,x_2) \in \mathcal{R}$.
\end{defn}

A common way to define a correspondence between two sets is via a surjective map $f:X_1\rightarrow X_2$ so that
\begin{align}
    \mathcal{R}=\{(x_1,f(x_1)): x_1 \in X_1\}.
\end{align}

We now show how to define the distortion of a correspondence which will be used to quantify the distance between two metric spaces.

\begin{defn}
   Let $\mathcal{R}$ be a correspondence between two metric spaces $(X_1,d_1)$,  $(X_2,d_2)$. The \textbf{distortion} of $\mathcal{R}$ is
   \begin{align}
       \dis\mathcal{R}=\sup\{|d_1(x_1,x_1')-d_2(x_2,x_2')|: (x_1,x_2),(x_1',x_2') \in \mathcal{R}\}.
   \end{align}
\end{defn}

We can now give a definition of the Gromov-Hausdorff distance between two metric spaces.

\begin{defn}\label{def-GHdistance}
  The \textbf{Gromov-Hausdorff distance} between two metric spaces $(X_1,d_1)$,  $(X_2,d_2)$ is
   \begin{align}
  d_{GH}((X_1,d_1),(X_2,d_2))=\frac{1}{2} \inf_{\mathcal{R}}\dis \mathcal{R},
   \end{align}
   where the infimum is taken over all correspondences between $X_1$ and $X_2$.
\end{defn}

Given a notion of distance between spaces we can now define the corresponding notion of convergence.

\begin{defn}\label{def-GHconvergence}
   Given a sequence of metric spaces $(X_i,d_i)$ we say that the sequence converges to the limit metric space $(X_{\infty},d_{\infty})$ in the Gromov-Hausdorff sense if as $i \rightarrow \infty$ we find
   \begin{align}
       d_{GH}((X_i,d_i),(X_{\infty},d_{\infty})) \rightarrow 0.
   \end{align}
\end{defn}

We note that when the set defining the metric space is the same, uniform convergence of a sequence of metric spaces implies Gromov-Hausdorff convergence of the sequence.  In particular, the quantity defined in \eqref{def-UniformDist} which is going to $0$ in \eqref{def-UniformConvergence} is showing that the distortion of Definition \eqref{def-GHdistance} is going to $0$. In this paper the sequences we consider will mostly be defined on the same set $X=[t_0,t_1]\times \Sigma$ and hence uniform convergence will almost always be equivalent to Gromov-Hausdorff convergence. The only two exceptions to this rule show up in example \ref{Ex-BlowUpBubble} and example \ref{Ex-BlowUpSpline} where the limiting metric space is defined on a different set. 
We now want to introduce a useful way of establishing the Gromov-Hausdorff convergence of a sequence of metric spaces. This will be used in example \ref{Ex-BlowUpBubble} and example \ref{Ex-BlowUpSpline} where the topology of the limit changes and hence uniform convergence does not apply.

\begin{defn}\label{def-AlmostIsometry}
   Let $\varepsilon>0$ and $(X_1,d_1)$,  $(X_2,d_2)$ be two metric spaces. A function $f:X_1\rightarrow X_2$ is called an \textbf{$\varepsilon$-isometry} if it is almost distance preserving
   \begin{align}
      \dis \{(x_1,f(x_1)):x_1\in X_1\} \le \varepsilon, 
   \end{align} and it is almost onto 
   \begin{align}
     \sup\{d_2(f(X_1),x_2):x_2 \in X_2\}\le \varepsilon.  
   \end{align}
\end{defn}

We now give a useful theorem for showing Gromov-Hausdorff convergence where we note that the function in the theorem need not be continuous and will not be when used in example \ref{Ex-BlowUpBubble} and example \ref{Ex-BlowUpSpline}.

\begin{thm}\label{thm-AlmostIsometryImpliesGH}
    Let $(X_j,d_j)$ be a sequence of metric spaces and $(X_{\infty},d_{\infty})$ a desired limiting metric space.  If for all $j \in \N$ there exists a $\frac{1}{j}-isometry$ $f:X_j\rightarrow X_{\infty}$ then $(X_j,d_j)$ converges to $(X_{\infty},d_{\infty})$ in the Gromov-Hausdorff sense.
\end{thm}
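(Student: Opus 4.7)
The plan is to upgrade the given $\frac{1}{j}$-isometries $f_j : X_j \to X_\infty$ into correspondences and then estimate the distortion of those correspondences directly from Definition \ref{def-AlmostIsometry}, invoking Definition \ref{def-GHdistance} at the end.

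First, I would define, for each $j$,
\begin{align}
\mathcal{R}_j = \{(x,y)\in X_j \times X_\infty : d_\infty(f_j(x),y) \le \tfrac{1}{j}\}.
\end{align}
The quickest preliminary step is to check $\mathcal{R}_j$ is a correspondence in the sense of the definition preceding Definition \ref{def-GHdistance}: every $x \in X_j$ is paired with $f_j(x)$ since $d_\infty(f_j(x),f_j(x))=0$, and every $y \in X_\infty$ admits some $x\in X_j$ paired with it by the almost-onto clause of Definition \ref{def-AlmostIsometry}.

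Next, I would bound $\dis \mathcal{R}_j$. Given $(x,y),(x',y') \in \mathcal{R}_j$, I insert $f_j(x),f_j(x')$ and split
\begin{align}
|d_j(x,x') - d_\infty(y,y')| \le |d_j(x,x')-d_\infty(f_j(x),f_j(x'))| + |d_\infty(f_j(x),f_j(x')) - d_\infty(y,y')|.
\end{align}
The first summand is at most $\tfrac{1}{j}$ because $f_j$ is almost distance preserving, i.e.\ the graph correspondence $\{(z,f_j(z)): z \in X_j\}$ has distortion at most $\tfrac{1}{j}$. The second summand is bounded, by two applications of the reverse triangle inequality in $(X_\infty,d_\infty)$, by $d_\infty(f_j(x),y)+d_\infty(f_j(x'),y') \le \tfrac{2}{j}$. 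Hence $\dis \mathcal{R}_j \le \tfrac{3}{j}$.

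Combining these, Definition \ref{def-GHdistance} gives $d_{GH}((X_j,d_j),(X_\infty,d_\infty)) \le \tfrac{1}{2}\dis \mathcal{R}_j \le \tfrac{3}{2j}\to 0$, which is exactly Gromov-Hausdorff convergence by Definition \ref{def-GHconvergence}. There is no real obstacle here beyond careful bookkeeping: the entire content of the argument is the triangle-inequality split above, and the only place to slip is in conflating the two roles played by $\tfrac{1}{j}$ (almost distance preservation versus almost surjectivity), which the decomposition keeps clean.
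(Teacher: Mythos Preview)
Your argument is correct in spirit and is the standard one. The paper does not actually supply a proof of this theorem; it is quoted in the background section as a well-known tool (cf.\ \cite{BBI}), so there is nothing in the paper to compare your approach against.

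One small technical wrinkle: the almost-onto clause in Definition~\ref{def-AlmostIsometry} only guarantees $\inf_{x\in X_j} d_\infty(f_j(x),y)\le \tfrac{1}{j}$, which need not be attained. So your set $\mathcal{R}_j=\{(x,y):d_\infty(f_j(x),y)\le \tfrac{1}{j}\}$ might fail to cover some $y\in X_\infty$. The easy fix is to widen the threshold, e.g.\ take $\mathcal{R}_j=\{(x,y):d_\infty(f_j(x),y)< \tfrac{2}{j}\}$; the same triangle-inequality split then yields $\dis\mathcal{R}_j\le \tfrac{5}{j}$ and hence $d_{GH}\le \tfrac{5}{2j}\to 0$. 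With that adjustment the proof is complete.
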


We also note that a Gromov-Hausdorff limit of Alexandrov spaces converges to a metric space which is also an Alexandrov space (See Proposition 10.7.1 \cite{BBI}). Hence, there is already a synthetic notion of sectional curvature defined for spacetimes (Lorentzian length spaces) with the null distance. It would be interesting to explore further the usefulness of this definition in the case of spacetimes and Lorentzian length spaces. It should also be noted that synthetics curvature bounds have been defined on Lorentzian length spaces by M. Kunzinger and C. S\"{a}mann \cite{KCS} and have been previously defined on Lorentzian manifolds by S. Harris \cite{SH}. It could be interesting to compare the Alexandrov spaces one obtains from the null distance to the curvature bounds defined in \cite{KCS, SH}.

\section{Examples}\label{sect-Examples}

In this section we have several goals: To demonstrate the importance of the hypotheses in Theorem \ref{thm-MainTheorem t} and to point out the difficulty in extending this theorem to more general globally hyperbolic spacetimes.  Furthermore, the examples in this section are crucial to understanding the convergence of spacetimes with the null distance under Gromov-Hausdorff and spacetime intrinsic flat convergence, but should also be important for testing the various notions of convergence of spacetimes which have been explored recently \cite{CM,MS,M,SSFuture}. 

Throughout this section $(\mathbb{D}^n,\sigma)$ stands for the closed flat unit disk which is chosen for notational convenience. It should be noted that the conclusions of the examples does not rely heavily on this choice and similar examples will hold for a compact Riemannian manifold $(\Sigma^n,\sigma)$ replacing $\mathbb{D}^n$.

\subsection{Shortcut Example}

First we remember an example given by the author and A. Burtscher \cite{AB} of a sequence of warped products $f_j$ converging in $L^1$ to $f_{\infty}$, where $f_j \not \ge (1-\frac{1}{j})f_{\infty}$, and so that $\hat{d}_{g_j}$ does not converge uniformly to $\hat{d}_{g_{\infty}}$. This example shows the necessity of $C^0$ convergence from below of the warping functions in Theorem \ref{thm-MainTheorem tau} and Theorem \ref{thm-MainTheorem t}. One should note that a similar observation was made by the author and C. Sormani \cite{Allen-Sormani, Allen-Sormani-2} in the Riemannian case.

	\begin{ex}[Example 5.7 of \cite{AB}]\label{Ex-Shortcut}
	For any fixed compact Riemannian manifold $(\Sigma,\sigma)$ and $0<h_0<1$, the sequence of warping functions $f_j$, depicted below, defines a sequence of warped product Lorentzian metrics \begin{align}
	   g_j=-dt^2+f_j^2\sigma
	\end{align} 
 with induced null distances $\hat{d_j}$ on the manifold\ $M = [0,2] \times \Sigma$.
	
\begin{center}
 \begin{tikzpicture}[scale=1.8]
  \draw[->] (0,0) -- (2.2,0) node[anchor=north west] {$t$};
  \draw[->] (0,0) -- (0,1.5) node[anchor=south east] {};
  
  \foreach \x/\xtext in { 0.5/\frac{1}{j+1},1/\frac{1}{j},  1.2/1, 2/2}
    \draw[shift={(\x,0)}] (0pt,1pt) -- (0pt,-1pt) node[below] {$\xtext$};

  \draw[thick] (1,1) -- (2,1);
  \node[above, outer sep=2pt] at (1.5,1) {$f_j$};
  \draw[thick] (0,1/2) -- (1/2,1/2);
  \node[left, outer sep=2pt] at (0,.5) {$h_0$};
  \draw[thick] (1/2,1/2) cos (3/4,3/4) sin (1,1);

  \node[left, outer sep=2pt] at (0,1) {$1$};
  \draw[shift={(0,1)}] (-1pt,0pt) -- (1pt,0pt);
 \end{tikzpicture}
 \end{center}

 The sequence of null distances $\hat{ d}_{t,g_j}$ induced by the warping functions $f_j$ converges uniformly to the metric $d_0 \neq  \hat {d}_{t,g}$, $g=-dt^2+\sigma$.
 \vspace{0.5cm}

 For points $p=(t(p),p_\Sigma),q=(t(q),q_\Sigma) \in [0,2] \times \Sigma$  the metric $d_0$ is given by
 \begin{align}
  d_0(p,q) = \min\left\{\hat{ d}_\sigma(p,q), t(p)+t(q) + h_0 \, \hat{d}_\sigma(J_{\sigma}^-(p)\cap  \Sigma_0,J_{\sigma}^-(q)\cap  \Sigma_0)\right\}
 \end{align}
 where $\Sigma_{t} = \{t\}\times \Sigma$.
 \end{ex}

\subsection{Blow up which does not effect the sequence}

Next we generalize an example given by the author and Burtscher \cite{AB} where the following sequence of warping functions, given in Figure \ref{fig-ShortcutFigure}, was considered.
\begin{figure} [h]\label{fig-ShortcutFigure}
  \centering
  \begin{tikzpicture}[scale=2]
  \draw[->] (0,0) -- (2.2,0) node[anchor=north west] {$t$};
  \draw[->] (0,0) -- (0,2) node[anchor=south east] {};
  
  \foreach \x/\xtext in { 0.7/\frac{1}{j+1}, 1/\frac{1}{j}, 1.2/1, 2/2}
    \draw[shift={(\x,0)}] (0pt,1pt) -- (0pt,-1pt) node[below] {$\xtext$};

  \draw[thick] (1,1) -- (2,1);
  \node[above, outer sep=2pt] at (1.5,1) {$f_j$};
  \draw[thick] (0,1+1/2) -- (1/2,1+1/2);
  \node[left, outer sep=2pt] at (0,1.5) {$h_0$};
    \draw[shift={(0,1)}] (-1pt,0pt) -- (1pt,0pt);
    \node[left, outer sep=2pt] at (0,1) {$1$};
  \draw[thick] (1/2,1+1/2) cos (3/4,1+1/4) sin (1,1);
  \end{tikzpicture}
 \end{figure} 
It was observed that unlike Example \ref{Ex-Shortcut}, when the warping function converges pointwise a.e. from above the sequence of null distance warped product spacetimes converges to a Minkowski product with the null distance. In the next example we generalize this by allowing the metrics to blow up along the sequence where we see that the example is completely agnostic to the blow up rate of the sequence.

  \begin{ex}\label{Ex:BlowUpNoEffect}
      Let $f_j:[0,1]\times \mathbb{D}^n$ be a sequence of continuous functions defined by
      \begin{align}f_j(t)
          \begin{cases}
          j^{\eta} & t \in [0,\frac{1}{2j}]
          \\h_j & t \in \ [\frac{1}{2j},\frac{1}{j}])
\\1& \text{ otherwise }
          \end{cases}
      \end{align}
      where $\eta \in (0,\infty)$, and $h_j$ is any continuous decreasing function so that $h_j(\frac{1}{2j})= j^{\eta}$ and $h_j(\frac{1}{j})=1$. If $g_j=-dt^2+f_j^2\sigma$, $g_0=-dt^2+\sigma$ on $[0,1]\times \mathbb{D}$ then we find
      \begin{align}
          \hat{d}_{t,g_j} \rightarrow \hat{d}_{t,g_0},
      \end{align}
      uniformly.
  \end{ex}
  \begin{proof}
      Consider $p,q \in  [0,1]\times\mathbb{D}^2 $. If $q \in J^{\pm}_{g_j}(p)$ then since the null cones with respect to $g_0$ are wider than that of $g_j$ we see that $\hat{d}_{t,g_j}(p,q)=\hat{d}_{t,g_0}(p,q)$ and there is nothing to show. 
      
      Additionally, if $p,q \in [\frac{1}{j},1] \times \mathbb{D}^2 $ then it is also the case that $\hat{d}_{t,g_j}(p,q)=\hat{d}_{t,g_0}(p,q)$ and there is nothing to show.

      Hence, assume that $p \in [0,\frac{1}{j}]\times \mathbb{D}^2$ and $q \not \in J^{\pm}_{g_j}(p)$. Now define $p'=(p_{\mathbb{D}^2},\frac{1}{j})$ and $q'=(q_{\mathbb{D}^2},\frac{1}{j})$ so that $\hat{d}_{g_j}(p,p')\le\frac{1}{j}$, $\hat{d}_{g_j}(q,q')\le\frac{1}{j}$. 

      If $q \in [0,\frac{1}{j}]\times \mathbb{D}^2$ then  we see that
      \begin{align}
        \hat{d}_{t,g_j}(p,q)&\le \hat{d}_{t,g_j}(p,p')+\hat{d}_{t,g_j}(p',q')+\hat{d}_{t,g_j}(q',q) 
        \\&\le \frac{2}{j} + d_{\sigma}(p'_{\mathbb{D}^2},q'_{\mathbb{D}^2}) =\frac{2}{j} + d_{\sigma}(p_{\mathbb{D}^2},q_{\mathbb{D}^2}).          
        \end{align}
Then if $q \in J_{g_0}^{\pm}(p)$ we find
\begin{align}
        \hat{d}_{t,g_j}(p,q)&\le \frac{2}{j} + d_{\sigma}(p_{\mathbb{D}^2},q_{\mathbb{D}^2})
        \\&\le \frac{2}{j} + |t(p)-t(q)| =\frac{2}{j}+ \hat{d}_{t,g_0}(p,q).
        \end{align}
Else if $q \not \in J_{g_0}^{\pm}(p)$ we find
           \begin{align}
        \hat{d}_{t,g_j}(p,q)&\le \frac{2}{j}+ \hat{d}_{t,g_0}(p,q). 
        \end{align}
      
      If $q \in [\frac{1}{j},1]\times \mathbb{D}^2$ then  we see that 
      \begin{align}
        \hat{d}_{t,g_j}(p,q)&\le \hat{d}_{t,g_j}(p,p')+\hat{d}_{t,g_j}(p',q) \le \frac{1}{j} + \hat{d}_{t,g_0}(p',q) .          
        \end{align}
        If $q \not \in J^{\pm}_{g_0}(p')$ then by applying Lemma \ref{lem:nulld1} we find
        \begin{align}
        \hat{d}_{t,g_j}(p,q)& \le \frac{1}{j} + d_{\sigma}(p'_{\mathbb{D}^2},q_{\mathbb{D}^2})
  = \frac{1}{j} + d_{\sigma}(p_{\mathbb{D}^2},q_{\mathbb{D}^2}) = \frac{1}{j} + \hat{d}_{t,g_0}(p,q).
      \end{align}
       If $q \in J^{\pm}_{g_0}(p')$
        \begin{align}
        \hat{d}_{t,g_j}(p,q)& \le \frac{1}{j} + |t(p')-t(q)|
         \\&\le \frac{1}{j} + |t(p')-t(p)|+|t(p)-t(q)|  = \frac{2}{j} +\hat{d}_{t,g_0}(p,q).
      \end{align}
Then since $f_j \ge 1$ we see that the set of piecewise causal curves with respect to $g_j$ is strictly smaller than the corresponding set in the Minkowski product space and hence $\hat{d}_{t,g_j} \ge \hat{d}_{t,g_0}$. So we find that
\begin{align}
\hat{d}_{t,g_0}(p,q) \le    \hat{d}_{t,g_j}(p,q) \le \frac{2}{j} + \hat{d}_{t,g_0}(p,q),
\end{align}
and we are done.      
  \end{proof}

\subsection{Blowing up on a dense set}

In the previous example we saw that the blow up rate of $f_j$ may not effect the convergence of the sequence if it is happening at an isolated point. In the next example we see that if $f_j$ converges pointwise almost everywhere to $1$, but $f_j$ blows up along a dense set of points, then the blow up rate can effect the convergence of the sequence in dramatic fashion. In fact, we will see that the limit of the sequence of null distance metric spaces is a taxi metric on $[0,2]\times \mathbb{D}^n$, further showing the importance of $L^1$ convergence in Theorem \ref{thm-MainTheorem t}.
  
   \begin{ex}\label{Ex-ManyBlowUpsTaxi} 
We construct a sequence of functions
$f_j:[0,1]\to [1,\infty)$ 
by letting
 \begin{align}
 S=\left\{s_{i,j}= \tfrac{i}{2^j}\,: \,  i=1,2,... (2^j-1),\, j\in \mathbb{N}\right\}= \left\{ \tfrac{1}{2},\tfrac{1}{4}, \tfrac{2}{4},  \tfrac{3}{4}, 
\tfrac{1}{8},\tfrac{2}{8},\tfrac{3}{8}...\right\},
 \end{align}
 which is dense in $[0,1]$
 and
 \begin{align}
 \delta_j:=(1/2)^{2j},\quad j \in \mathbb{N} .
 \end{align}
 Let
  \begin{align}
 f_j(t)=
 \begin{cases}
  h_j((t-s_{i,j})/\delta_j ) & t\in [s_{i,j}-\delta_j, s_{i,j} +\delta_j] \textrm{ for } i =1...2^j-1
 \\ 1 & \textrm{ elsewhere }
 \end{cases}
\end{align}
where $h_j:[-1,1]\rightarrow [1,\infty)$ is a continuous function such that 
$h_j(-1)=1$, 
increasing up to $h_j(t)=j^{\eta}2^j$, $\eta \in (0,\infty)$ for $t \in [-\frac{1}{2},\frac{1}{2}]$ and then
decreasing back down to $h_j(1)=1$. 
If we define $g_j=-dt^2+f_j(t)^2 \sigma$ on $M=[0,1]\times \mathbb{D}^n$ and 
\begin{align}
    d_{\infty}(p,q)=|t(p)-t(q)|+d_{\sigma}(p_{\mathbb{D}^n},q_{\mathbb{D}^n}),
\end{align}
then we find that
\begin{align}
    \hat{d}_{t,g_j} \rightarrow d_{\infty},
\end{align}
uniformly.
 \end{ex}

 \begin{proof}
If we define $g_0=-dt^2+\sigma$ and let $p \in M$ then we notice that $J^{\pm}_{g_j}(p) \subset J^{\pm}_{g_0}(p)$. For any $p,q \in M$ we can without loss of generality assume $t(p) \le t(q)$ and find a $t_j \in [t(p)-\delta_j,t(q)+\delta_j]\cap [0,1]$ so that $f_j(t)=1$ for $t \in [t_j-\frac{\delta_j}{2},t_j+\frac{\delta_j}{2}]$. Then by the triangle inequality we know
\begin{align}
    \hat{d}_{t,g_j}(p,q)& \le \hat{d}_{t,g_j}(p,(t_j,p))+\hat{d}_{t,g_j}((t_j,p),(t_j,q))+\hat{d}_{t,g_j}((t_j,q),q)\label{MiddleTermEst}
    \\&\le |t(p)-t_j|+d_{\sigma}(p_{\mathbb{D}^n},q_{\mathbb{D}^n})+|t(q)-t_j|
       \\&\le |t(p)-t(q)|+d_{\sigma}(p_{\mathbb{D}^n},q_{\mathbb{D}^n})+2\delta_j= d_{\infty}(p,q)+2\delta_j,
\end{align}
where the estimate of the middle term in \eqref{MiddleTermEst} is found by a zig-zag piecewise Minkowski null curve in the $[t'-\frac{\delta_j}{2},t'+\frac{\delta_j}{2}]$ region.

     Next we claim that for all $p \in M$  we find  $\bigcap_{j \in \N}J^{\pm}_{g_j}(p)=\{(t,p_{\mathbb{D}^n}) \in M: t \in [0,1]\}$. To this end, if we consider $\beta:(a,b) \rightarrow M$ so that $\beta'(t)=(f_j(t),\alpha'(t))$, $|\alpha'(t)|_{\sigma}=1$, a null curve with respect to $g_j$, then we notice that 
 \begin{align}
     \int_0^1 f_j(t)dt \ge \sum_{i=1}^{2^j-1} \delta_j j^{\eta}2^j=j^{\eta}2^j\frac{2^j-1}{2^{2j}} \rightarrow \infty,
 \end{align}
 and by construction we also see that for any $(a,b) \subset [0,1]$ and $j$ large enough 
\begin{align}\label{NullCurvesBecomeVertical}
     \int_a^b f_j(t)dt \ge \frac{|b-a|}{2} j^{\eta}2^j\frac{2^j-1}{2^{2j}} \rightarrow \infty.
 \end{align}
 So we see that the null curves with respect to $g_j$ are becoming vertical as $j \rightarrow \infty$. This implies that the only causal curves which remain null for the entire sequence are the curves which move solely in the $t$ direction. 

 Let $t_j \in [0,1]$ so that $|t(q)-t_j|\le \delta_j$ and $f_j=1$ on $(t_j-\frac{\delta_j}{2},t_j+\frac{\delta_j}{2})$. Due to the construction of $f_j$ we see that an optimal piecewise null curve with respect to $g_j$ connecting $p$ to $q$ would connect $p$ to a point $p_j \in (t_j-\frac{\delta_j}{2},t_j+\frac{\delta_j}{2})\times \mathbb{D}^n$ by a null curve with respect to $g_j$, take advantage of the $g_0$ null cones in this region, and then travel from $q_j \in (t_j-\frac{\delta_j}{2},t_j+\frac{\delta_j}{2})\times \mathbb{D}^n$ to $q$ along one final null curve with respect to $g_j$. Then by \eqref{NullCurvesBecomeVertical} and the definition of $p_j$ and $q_j$ we find that
 \begin{align}
     \max \{d_{\sigma}((p_j)_{\mathbb{D}^n},p_{\mathbb{D}^n}),d_{\sigma}((q_j)_{\mathbb{D}^n},q_{\mathbb{D}^n})\} \le \frac{C}{j}
 \end{align}
 which implies
 \begin{align}
   d_{\sigma}((p_j)_{\mathbb{D}^n},(q_j)_{\mathbb{D}^n}) &\ge d_{\sigma}(p_{\mathbb{D}^n},q_{\mathbb{D}^n})-d_{\sigma}((p_j)_{\mathbb{D}^n},p_{\mathbb{D}^n})-d_{\sigma}((q_j)_{\mathbb{D}^n},q_{\mathbb{D}^n})
   \\&\ge  d_{\sigma}(p_{\mathbb{D}^n},q_{\mathbb{D}^n})-\frac{2C}{j}.
 \end{align}
Putting all of this together we find
 \begin{align}
   \hat{d}_{t,g_j}(p,q)+\frac{1}{j}&\ge |t(p)-t(p_j)|+d_{\sigma}((p_j)_{\mathbb{D}^n},(q_j)_{\mathbb{D}^n})+|t(q_j)-t(q)|
   \\&\ge |t(p)-t(q)|-2\delta_j+d_{\sigma}(p_{\mathbb{D}^n},q_{\mathbb{D}^n})-\frac{2C}{j},
   \\&\ge d_{\infty}(p,q)-2\delta_j-\frac{2C}{j},
 \end{align}
 and hence we are done.

 \end{proof}

 \subsection{A dense network of shortcuts}

 In the next example we want to see if instead of $L^1$ convergence to $f_{\infty}$ one could just require pointwise convergence along a dense set to $f_{\infty}$. To test this hypothesis we consider an example where $1=f_{\infty} \le f_j$, $f_j$ converges to $f_{\infty}$ pointwise along a dense set of points, but does not converge to $f_{\infty}$ in $L^1$. We will see that $\hat{d}_{t,g_j}$ will not converge to $\hat{d}_{t,g_{\infty}}$ and instead will converge to a metric space which is not the null distance induced by a Lorentzian product, further showing the importance of $L^1$ convergence in Theorem \ref{thm-MainTheorem t}. At first glance the limiting metric space of example \ref{Ex-ManyShortcutTaxi} may seem exactly the same as Lemma \ref{lem:nulld1} but one should notice that the causal future and past is defined with respect to $-dt^2+4\sigma$ in the limiting metric and hence is not the null distance with respect to a Lorentzian product.

  \begin{ex}\label{Ex-ManyShortcutTaxi} 
We construct a sequence of functions
$f_j:[0,1]\to [1,2]$ 
by letting
 \begin{align}
 S=\left\{s_{i,j}= \tfrac{i}{2^j}\,: \,  i=1,2,... (2^j-1),\, j\in \mathbb{N}\right\}= \left\{ \tfrac{1}{2},\tfrac{1}{4}, \tfrac{2}{4},  \tfrac{3}{4}, 
\tfrac{1}{8},\tfrac{2}{8},\tfrac{3}{8} ,...\right\}
 \end{align}
 which is dense in $[0,1]$
 and
 \begin{align}
 \delta_j:=(1/2)^{2j},\quad j \in \mathbb{N}.
 \end{align}
 Let
  \begin{align}
 f_j(t)=
 \begin{cases}
  h((t-s_{i,j})/\delta_j ) & t\in [s_{i,j}-\delta_j, s_{i,j} +\delta_j] \textrm{ for } i =1...2^j-1
 \\ 2 & \textrm{ elsewhere }
 \end{cases}
\end{align}
where $h:[-1,1]\rightarrow [1,2]$ is a continuous function such that 
$h(-1)=2$, 
decreasing down to $h(t)=1$ for $t \in [-\frac{1}{2},\frac{1}{2}]$ and then
increasing back up to $h(1)=2$. 
If we define $g_j=-dt^2+f_j(t)^2 \sigma$, $g=-dt^2+4\sigma$  on $[0,2]\times \mathbb{D}^n$ 
and 
\begin{align}\label{d:genMink}
   d_0(p,q) = 
  \begin{cases}
 |t(p) - t(q)| & q \in J^\pm_g(p), \\
|t(p) - t(q)|+ d_\sigma((J^{\pm}_g(p)\cap (t(q) \times\mathbb{D}^n))_{\mathbb{D}^n},q_{\mathbb{D}^n}) & q \not\in J^\pm_g(p),
\end{cases}
 \end{align}
then we find that
\begin{align}
    \hat{d}_{t,g_j} \rightarrow d_0,
\end{align}
uniformly.
 \end{ex}

  \begin{proof}
  If we define $g_0=-dt^2+\sigma$ and let $p \in M$ then we notice that $J^{\pm}_{g}(p) \subset J^{\pm}_{g_j}(p) \subset J^{\pm}_{g_0}(p)$.
 So for $q \in J^{\pm}_{g}(p)$ we find $\hat{d}_{g_j,t}(p,q)=|t(p)-t(q)|= d_0(p,q)$. 
 
 Now consider $q \not\in J^{\pm}_{g}(p)$. If $t(p)=t(q)=t'$ then there exists a $t_j \in [0,1]$ so that $|t'-t_j|\le \delta_j$ and $f_j=1$ on $(t_j-\frac{\delta_j}{2},t_j+\frac{\delta_j}{2})$ by construction. Hence one can build an admissible piecewise causal curve with respect to $g_j$ by first traveling from $p$ solely in the $t$-direction into the region $(t_j-\frac{\delta_j}{2},t_j+\frac{\delta_j}{2}) \times \mathbb{D}^n$, then traveling along a piecewise Minkowski null curve contained in $(t_j-\frac{\delta_j}{2},t_j+\frac{\delta_j}{2}) \times \mathbb{D}^n$, and back along a curve traveling solely in the $t$-direction to $q$ so that 
 \begin{align}
     \hat{d}_{t,g_j}(p,q) \le d_{\sigma}(p_{\mathbb{D}^n},q_{\mathbb{D}^n})+\frac{\delta_j}{2}.
 \end{align}
 Since $J^{\pm}_{g_j}(p) \subset J^{\pm}_{g_0}(p)$ we know that 
 \begin{align}
  d_{\sigma}(p_{\mathbb{D}^n},q_{\mathbb{D}^n})=   \hat{d}_{t,g_0}(p,q) \le   \hat{d}_{t,g_j}(p,q),
 \end{align}
 and we are done with this case.

Now assume $t(p)\not = t(q)$ and a $q'\in J^{\pm}_{g}(p)$ so that $t(q')=t(q)$ and $d_{\sigma}(q'_{\mathbb{D}^n},q_{\mathbb{D}^n})$ is minimal. Then we know by the triangle inequality and the previous case that
\begin{align}
    \hat{d}_{t,g_j}(p,q)& \le \hat{d}_{t,g_j}(p,q')+\hat{d}_{t,g_j}(q',q)
    \\&\le|t(p)-t(q)|+d_{\sigma}(q'_{\mathbb{D}^n},q_{\mathbb{D}^n})+\frac{\delta_j}{2}.
\end{align}

Now we aim for the estimate from below. If we consider $\beta:(a,b) \rightarrow M$, $(a,b) \subset [0,1]$ so that $\beta'(t)=(f_j(t),\alpha'(t))$, $|\alpha'(t)|_{\sigma}=1$, a null curve with respect to $g_j$, then we notice that 
 \begin{align}\label{ConvergenceInLenthEx}
     \hat{L}_{g_j}(\beta)=\int_a^b f_j(t)dt \rightarrow 2|b-a|,
 \end{align}
 and hence the length of the null curves of $g_j$ converge to the length of the null curves of $g$ in the limit, as well as the endpoints of the respective null curves. Let $t'=t(q)$ and  $t_j \in [0,1]$ so that $|t'-t_j|\le \delta_j$ and $f_j=1$ on $(t_j-\frac{\delta_j}{2},t_j+\frac{\delta_j}{2})$. Due to the construction of $f_j$ we see that an optimal piecewise null curve with respect to $g_j$ connecting $p$ to $q$ would connect $p$ to a point $p_j \in (t_j-\frac{\delta_j}{2},t_j+\frac{\delta_j}{2})\times \mathbb{D}^n$ by a null curve with respect to $g_j$, take advantage of the $g_0$ null cones in this region, and then travel from $q_j \in(t_j-\frac{\delta_j}{2},t_j+\frac{\delta_j}{2})\times \mathbb{D}^n$ to $q$ along one final null curve with respect to $g_j$. Then by the consequences of \eqref{ConvergenceInLenthEx} and the definition of $p_j$ and $q_j$ there must exist a $p' \in J^{\pm}_{g}(p)$, $t(p')=t(q)$ so that
 \begin{align}
     \max\{d_{\sigma}((p_j)_{\mathbb{D}^n},p'_{\mathbb{D}^n}),d_{\sigma}((q_j)_{\mathbb{D}^n},q_{\mathbb{D}^n})\} \le \frac{C}{j}
 \end{align}
 which implies
 \begin{align}
   d_{\sigma}((p_j)_{\mathbb{D}^n},(q_j)_{\mathbb{D}^n}) &\ge d_{\sigma}(p'_{\mathbb{D}^n},q_{\mathbb{D}^n})-d_{\sigma}((p_j)_{\mathbb{D}^n},p'_{\mathbb{D}^n})-d_{\sigma}((q_j)_{\mathbb{D}^n},q_{\mathbb{D}^n})
   \\&\ge  d_{\sigma}(p'_{\mathbb{D}^n},q_{\mathbb{D}^n})-\frac{2C}{j}.
 \end{align}
Putting all of this together we find
 \begin{align}
   \hat{d}_{t,g_j}(p,q)+\frac{1}{j}&\ge |t(p)-t(p_j)|+d_{\sigma}((p_j)_{\mathbb{D}^n},(q_j)_{\mathbb{D}^n})+|t(q_j)-t(q)|
   \\&\ge |t(p)-t(q)|-2\delta_j+d_{\sigma}(p'_{\mathbb{D}^n},q_{\mathbb{D}^n})-\frac{2C}{j},
     \\&\ge d_0(p,q) -2\delta_j-\frac{2C}{j},
 \end{align}
 and hence we are done.
 \end{proof}

 One should notice that example \ref{Ex-ManyShortcutTaxi} is a member of a family of examples. If one defines $g^k=-dt^2+k^2\sigma$ for $0<k< \infty$ then we can generalize example \ref{Ex-ManyShortcutTaxi} by defining a similar sequence of functions $f_j^k$ so that $1 \le f_j^k \le k$ where $\hat{d}_{g_j^k,t}$ would uniformly converge to 
 \begin{align}\label{d:genMink}
   d_{k-2}(p,q) = 
  \begin{cases}
 |t(p) - t(q)| & q \in J^\pm_{g^k}(p) \\
|t(p) - t(q)|+ d_\sigma((J^{\pm}_{g^k}(p)\cap (t(q) \times\mathbb{D}^n))_{\mathbb{D}^n},q_{\mathbb{D}^n}) & q \not\in J^\pm_{g^k}(p)
\end{cases}.
 \end{align}

 Notice as $k \rightarrow \infty$ that $d_{k-2}$ converges to the taxi metric $d_{\infty}(p,q)$ of example \ref{Ex-ManyBlowUpsTaxi}.

\subsection{Blow up producing a bubble}

In Example \ref{Ex:BlowUpNoEffect} we saw that the convergence was completely agnostic to the blow up rate of the sequence. In the next two examples we will see that this is generally not the case even if $f_{\infty}\le f_j$ and $f_j$ converges uniformly to $f_{\infty}$ away from the singular set $[0,1]\times \{0\}\subset [0,1]\times \mathbb{D}^n$. In particular, in the next example we see that the blow up of the sequence produces bubbling phenomenon in the limit, where although the sequence consists of manifolds the limit is not a manifold. It seems it would be interesting to further study examples like this in the context of Lorentzian length spaces of Kunzinger and S\"{a}mann \cite{KS}.

  \begin{ex}\label{Ex-BlowUpBubble}
      Let $(\mathbb{D}^n,\sigma)$, $n \ge 2$ be a flat disk, $f_j:[0,1]\times \mathbb{D}^n$, $j \ge 2$ a sequence of continuous functions defined radially on $\mathbb{D}^n$ by
      \begin{align}f_j(r)
          \begin{cases}
          j & r\in  \left[0,\frac{1}{j}\right]
          \\h_j(r)& r \in   \left[\frac{1}{j},\frac{3}{2j}\right]
\\1& \text{ otherwise }
          \end{cases}
      \end{align}
      where $h_j$ is any decreasing continuous function so that $h_j(\frac{1}{j})=j$, $h_j(\frac{3}{2j})=1$, and $\displaystyle \int_{\frac{1}{j}}^{\frac{3}{2j}} h_jdr\le \frac{C}{j}$. If $g_j=-dt^2+f_j^2\sigma$ and $g_0=-dt^2+\sigma$ then we find $\hat{d}_{t,g_j} \not\rightarrow \hat{d}_{t,g_0}$. Furthermore, we let $M=([0,1]\times \mathbb{D}^n,\hat{d}_{t,g_0})$, $N_1=([0,1]\times \mathbb{D}^n,\hat{d}_{t,g_0})$, $N_2=([0,1]\times \mathbb{D}^n,\hat{d}_{t,g_0})$, and 
      \begin{align}
         F:[0,1] \times \partial \mathbb{D}^2\subset N_1 \rightarrow  [0,1]\times\{0\}\subset N_2
      \end{align} 
      defined by projection of the second factor.  Then if $(N_1\sqcup N_2, \hat{d}_{t,g_0})$ is the disjoint union of $N_1$ and $N_2$ with the null distance then we can define the metric space $(P,d_0)=(N_1 \sqcup N_2, \hat{d}_{t,g_0}/\sim)$ where we identify points by the map $F$ in order to conclude that    
      \begin{align}
          (M,\hat{d}_{t,g_j}) \rightarrow (P,d_0),
      \end{align}
      in the Gromov-Hausdorff sense.
  \end{ex}
  \begin{proof}
      Here our aim is to apply Definition \ref{def-AlmostIsometry} and Theorem \ref{thm-AlmostIsometryImpliesGH} to show Gromov-Hausdorff convergence. To this end we define a map
      \begin{align}
          \Lambda_j: (P,d_0)\rightarrow (M,\hat{d}_{g_j}),
      \end{align}
      so that
      \begin{align}
          (N_2\setminus ([0,1]\times \{0\})) \subset P &\mapsto M \setminus ([0,1]\times B_{\frac{3}{2j}}(0)) \subset M,\label{Mapdef1}
          \\ N_1 \subset P &\mapsto [0,1]\times B_{\frac{1}{j}}(0) \subset M,\label{Mapdef2}
      \end{align}
      where each of the portions of the maps in \eqref{Mapdef1} and \eqref{Mapdef2} is onto and is defined by radially scaling the second factor. 
      
      We start by checking that $\Lambda_j$ is almost onto. So we need to show that 
      \begin{align}
         \sup\{\hat{d}_{t,g_j}(\Lambda_j(P),x):x \in M\}<\frac{C}{j} 
      \end{align} 
      which we will accomplish by connecting each point $p \in M \setminus \Lambda_j(P)$ to a point $q \in \Lambda_j(P)$ by an admissible curve $\beta_j$ of short length. This will show that
      \begin{align}
         \hat{d}_{t,g_j}(p,q) \le \hat{L}_{g_j}(\beta_j) \le \frac{C}{j}, \quad \forall p \in M \setminus \Lambda_j(P),q \in \Lambda_j(P),
      \end{align} 
      which implies the claim.  To this end, connect a point $p \in M \setminus ([0,1]\times \partial B_{\frac{3}{2j}}(0))$ to a point $q \in [0,1]\times \partial B_{\frac{1}{j}}(0)$ by a null curve $\gamma:\left[\frac{1}{j},\frac{3}{2j}\right] \rightarrow M$ so that $\gamma'(s)=(h_j(s),x)$, $x \in \partial\mathbb{D}^n$ where 
      \begin{align}
          \hat{L}_{t,g_j}(\gamma)=\int_{\frac{1}{j}}^{\frac{3}{2j}} h_jdr\le \frac{C}{j}.
      \end{align} 
      
      Now we need to show that $\Lambda_j$ is  almost distance preserving, i.e. a $\frac{C}{j}$ isometry for some $C>0$. This will be accomplished by studying the distance function connecting different regions of $P$.

If we take $p,q \in  (N_2\setminus ([0,1]\times \{0\})) \subset P $ then we can write $p=(t(p),p_{\mathbb{D}^n})), q=(t(q),q_{\mathbb{D}^n})$ and hence $\Lambda_j(p)=(t(p),p_{\mathbb{D}^n}+\frac{3}{2j}(1-|p_{\mathbb{D}^n}|)p_{\mathbb{D}^n})$, $\Lambda_j(q)=(t(q),q_{\mathbb{D}^n}+\frac{3}{2j}(1-|q_{\mathbb{D}^n}|)q_{\mathbb{D}^n})$. By considering the class of piecwise null curves connecting $p$ to $q$ with respect to $g_0$ which remains outside of $[0,1]\times B_{\frac{3}{2j}}(0))$ we can see that
\begin{align}
\hat{d}_{t,g_0}(p,q)-\frac{C}{j}  \le  \hat{d}_{t,g_j}(\Lambda_j(p),\Lambda_j(q)) \le \hat{d}_{t,g_0}(p,q)+\frac{C}{j}.
\end{align}

If we take $p,q \in  N_1 \subset P $ then we can write $p=(t(p),p_{\mathbb{D}^n})), q=(t(q),q_{\mathbb{D}^n})$ and hence $\Lambda_j(p)=(t(p),\frac{1}{j}p_{\mathbb{D}^n})$, $\Lambda_j(q)=(t(q),\frac{1}{j}q_{\mathbb{D}^n})$. First we assume that $t(p)=t(q)$ and connect these points by a piecewise null curve $\gamma_j:\left[0,\frac{1}{j}\right]\rightarrow M$ so that $\gamma'_j=(\pm j,x)$, $x \in \mathbb{D}^n$ in which case
\begin{align}\label{AlmostIsometryEq1}
    \hat{d}_{t,g_j}(\Lambda_j(p),\Lambda_j(q))\le \hat{L}_{t,g_j}(\gamma) = j\left(\frac{1}{j}|p_{\mathbb{D}^n}-q_{\mathbb{D}^n}|\right)=\hat{d}_{t,g_0}(p,q).
\end{align}
Since it would not be better for $\gamma$ to leave the $[0,1]\times B_{\frac{1}{j}}(0)$ region we see that \eqref{AlmostIsometryEq1} is actually an equality. If $t(p)\not=t(q)$ and $q \in J^{\pm}_{g_0}(p)$ then $\Lambda_j(q) \in J^{\pm}_{g_j}(\Lambda_j(p))$ and the estimate follows since $ \hat{d}_{t,g_j}(\Lambda_j(p),\Lambda_j(q)) =\hat{d}_{t,g_0}(p,q)$. If $q \not\in J^{\pm}_{g_0}(p)$ then one can travel along a null curve of slope $j$ to a $p'$ so that $t(p')=t(q)$ and then connect $p'$ to $q$ as was done in \eqref{AlmostIsometryEq1}, leading to the same estimate.

If we take $p \in  (N_2\setminus ([0,1]\times \{0\})) \subset P $ and $q \in  N_1 \subset P $ then by considering any $p' \in  [0,1]\times \partial B_{\frac{3}{2j}}(0)\subset M$ and $q' \in [0,1]\times \partial B_{\frac{1}{j}}(0)\subset M$ so that $t(p')=t(q')$ we notice by the triangle inequality and the arguments above that
\begin{align}
    \hat{d}_{t,g_j}(\Lambda_j(p),\Lambda_j(q))&\le \hat{d}_{t,g_j}(\Lambda_j(p),p')+\hat{d}_{t,g_j}(p',q')+\hat{d}_{t,g_j}(q',\Lambda_j(q))
    \\&\le \hat{d}_{t,g_0}(p,\Lambda_j^{-1}(q'))+\hat{d}_{t,g_0}(\Lambda_j^{-1}(q'),q)+\frac{C}{j}.\label{QuotientEq}
\end{align}
By the definition of a quotient metric space (Definition 3.1.12 \cite{BBI}) as the infimum over \eqref{QuotientEq} where $\Lambda_j^{-1}(q')$ is in the identified set, we see that 
\begin{align}
    \hat{d}_{t,g_j}(\Lambda_j(p),\Lambda_j(q))\le \hat{d}_{t,g_0}(p,q)+\frac{C}{j}.
\end{align}
Similarly, 
\begin{align}
\hat{d}_{t,g_0}&(p,q)
\\&=\inf_{t \in [0,1],p',q' \in \partial \mathbb{D}^n}\hat{d}_{t,g_0}(p,(t,p'))+\hat{d}_{t,g_0}((t,q'),q)
\\&\le\inf_{t \in [0,1],p',q' \in \partial \mathbb{D}^n} \hat{d}_{t,g_j}(\Lambda_j(p),\Lambda_j((t,p')))+\hat{d}_{t,g_j}(\Lambda_j((t,q')),\Lambda_j(q))   
\\&\le \hat{d}_{t,g_j}(\Lambda_j(p),\Lambda_j(q))+\frac{1}{j},\label{LastEqExBubble}
\end{align}
which gives the opposite inequality. In \eqref{LastEqExBubble} we are using the fact that any almost minimizing $\hat{d}_{t,g_j}$ curve connecting $p$ to $q$ must pass through $\Lambda_j([0,1]\times \partial \mathbb{D}^n)$ where $[0,1]\times \partial \mathbb{D}^n \subset N_1$.

So we see that $\Lambda_j$ is an almost onto, almost isometry and hence $(M,\hat{d}_j)$ is converging in the Gromov-Hausdorff sense to $(P,d_0)$ by Definition \ref{def-AlmostIsometry} and Theorem \ref{thm-AlmostIsometryImpliesGH}.
  \end{proof}

\subsection{Blow up producing a generalized spline}

In the next example we see that for a sequence of functions blowing up at a critical rate the limit will be Minkowski space with a taxi metric defined on $[0,1]\times [0,1]$ attached to the $t$-axis. Since the attached taxi space  is a lower dimensional metric space then the Minkowski space portion it is attached to, we know that the Sormani-Wenger Intrinsic Flat limit will be a subset of the Minkowski space portion of the limit with the restricted metric or possibly the zero space (See a similar argument made in Example 5.11 and discussion in Remark 5.12 of \cite{AB}). Hence this is another example of a sequence of spacetimes with he null distance where the Gromov-Hausdorff limit and the Sormani-Wenger Intrinsic Flat limit disagree. The first such example was given by the author and A. Burtscher in Example 5.11 of \cite{AB}. 

As was noted in \cite{AB}, more technical understanding of integral current spaces and estimates for the Sormani-Wenger Intrinsic Flat distance in the case of null distance spacetimes are needed to precisely state what the limit space is. Nonetheless, we conjecture that the Sormani-Wenger Intrinsic Flat limit is the limit space of Example \ref{Ex-BlowUpSpline} with the taxi space portion removed. In order to avoid the massive expansion of technical background needed for this paper we do not state this fact rigorously in Example \ref{Ex-BlowUpSpline}. We expect that the future work of A. Sakovich and C. Sormani \cite{SSFuture} on Spacetime Intrinsic Flat convergence will be of vital importance to making this type of argument rigorous in the case of null distance spacetimes.

   \begin{ex}\label{Ex-BlowUpSpline}
      Let $(\mathbb{D}^n,\sigma)$, $n \ge 2$ be a flat disk, $f_j:[0,1]\times \mathbb{D}^n$ a sequence of continuous functions defined radially on $\mathbb{D}^n$ by
      \begin{align}f_j(r)
          \begin{cases}
          \frac{j^{\lambda}}{1+\lambda \ln(j)} & r \in \left[0,\frac{1}{j^{\lambda}}\right]
         \\ \frac{1}{r(1-\ln(r))} & r \in \left[\frac{1}{j^{\lambda}},\frac{1}{j}\right]
          \\h_j(r) & r \in  \left[\frac{1}{j},\frac{3}{2j}\right]
\\1& \text{ otherwise }
          \end{cases}
      \end{align}
      where $\lambda >1$ and $h_j$ is any decreasing, continuous function so that $h_j(\frac{1}{j})= \frac{j}{1+\ln(j)}$, $h_j(\frac{3}{2j})=1$, and $\displaystyle \int_{\frac{1}{j}}^{\frac{3}{2j}} h_jdr\le \frac{C}{j}$. If $g_j=-dt^2+f_j^2\sigma$, $g_0=-dt^2+\sigma$ then we find $\hat{d}_{t,g_j}\not\rightarrow \hat{d}_{t,g_0}$. Furthermore, we let $M=([0,1]\times \mathbb{D}^n,\hat{d}_{t,g_0})$, $N=([0,1]\times \mathbb{D}^n,\hat{d}_{t,g_0})$, $L=([0,1]\times [0,1],d_{\text{taxi}}^{\lambda})$, 
      \begin{align}
       d_{\text{taxi}}^{\lambda}((s_1,r_1),(s_2,r_2))=|s_1-s_2|+\lambda|r_1-r_2|,
      \end{align}
      and 
      \begin{align}
         F: [0,1]\times \{1\}\subset L \rightarrow  [0,1] \times \{0\}\subset N 
      \end{align} 
      so that $F(t,1)=(t,0)$. Then if $(N\sqcup L, \bar{d})$ is the disjoint union of $M$ and $N$ we can define the metric space $(P,d_0)=(N \sqcup L, \bar{d}/ \sim)$ where we identify points by the map $F$ in order to conclude that 
      \begin{align}
          (M,\hat{d}_{t,g_j}) \rightarrow (P,d_0),
      \end{align}
      in the Gromov-Hausdorff sense.
  \end{ex}  
  \begin{proof}
      Here our aim is to apply Definition \ref{def-AlmostIsometry} and Theorem \ref{thm-AlmostIsometryImpliesGH} to show Gromov-Hausdorff convergence. To this end we define a map
      \begin{align}
          \Lambda_j: (P,d_0)\rightarrow (M,\hat{d}_{t,g_j}),
      \end{align}
      so that for a fixed $x \in B_{\frac{1}{j}}(0) \setminus B_{\frac{1}{j^{\lambda}}}(0)\subset \mathbb{D}^n$ we define
      \begin{align}
          (N\setminus ([0,1]\times \{0\})) \subset P &\mapsto M \setminus ([0,1]\times B_{\frac{3}{2j}}(0)) \subset M,\label{Map3}
          \\ L\subset P &\mapsto [0,1]\times \left\{a\frac{x}{|x|}: a \in \left[j^{-\lambda},j^{-1}\right]\right\}\subset M,\label{Map4}
      \end{align}
      where the portion of the map defined in \eqref{Map3} and \eqref{Map4} are onto and defined by scaling the second factor. 
      
       We start by checking that this map is almost onto. So we need to show that 
      \begin{align}
         \sup\{\hat{d}_{t,g_j}(\Lambda_j(P),x):x \in M\}<\frac{C}{j} 
      \end{align} 
      which we will accomplish by connecting each point $p \in M \setminus \Lambda_j(P)$ to a point $q \in \Lambda_j(P)$ by an admissible curve $\beta_j$ of short length. This will show that
      \begin{align}
         \hat{d}_{t,g_j}(p,q) \le \hat{L}_{t,g_j}(\beta_j) \le \frac{C}{j}, \quad \forall p \in M \setminus \Lambda_j(P),q \in \Lambda_j(P),
      \end{align} 
      which implies the claim.  To this end, connect a point $p \in M \setminus ([0,1]\times \partial B_{\frac{3}{2j}}(0))$ to a point $q \in [0,1]\times \partial B_{\frac{1}{j}}(0)$ by a null curve $\gamma:\left[\frac{1}{j},\frac{3}{2j}\right] \rightarrow M$ so that $\gamma'(s)=(h_j(s),x)$, $x \in \partial\mathbb{D}^n$ where $\hat{L}_{t,g_j}(\gamma)=\int_{\frac{1}{j}}^{\frac{3}{2j}} h_jdr\le \frac{C}{j}$.  For any $p \in [0,1]\times B_{\frac{1}{j^{\lambda}}}(0)$ we can connect to a point $q \in [0,1]\times B_{\frac{1}{j}}(0)$ where $t(p)=t(q)$ by a piecewise null curve of length less than $\frac{1}{1+\ln(j)}$.
      Lastly, for any point $p \in B_{\frac{1}{j}}(0) \setminus B_{\frac{1}{j^{\lambda}}}(0)$ we can connect to a point $q \in [0,1]\times \left\{a\frac{x}{|x|}: a \in \left[j^{-\lambda},j^{-1}\right]\right\}$ with $t(p)=t(q)$ and $p_{\mathbb{D}^n},q_{\mathbb{D}^n} \in \partial B_r(0)$ for some $r \in [j^{-\lambda},j^{-1}]$ by a curve in $[0,1]\times \partial B_r(0)$ whose length is less than $\frac{1}{1-\ln(r)}$.
      
      Now we need to show that $\Lambda_j$ is almost distance preserving, i.e. a $\frac{C}{j}$ isometry for some $C>0$. This will be accomplished by studying the distance function connecting different regions of $P$.

      If we take $p,q \in  (N\setminus ([0,1]\times \{0\})) \subset P $ then we can write $p=(t(p),p_{\mathbb{D}^2})), q=(t(q),q_{\mathbb{D}^2})$ and hence $\Lambda_j(p)=(t(p),p_{\mathbb{D}^2}+\frac{3}{2j}(1-|p_{\mathbb{D}^2}|)p_{\mathbb{D}^2})$, $\Lambda_j(q)=(t(q),q_{\mathbb{D}^2}+\frac{3}{2j}(1-|q_{\mathbb{D}^2}|)q_{\mathbb{D}^2})$.  By considering the class of piecwise null curves connecting $p$ to $q$ with respect to $g_0$ which remains outside of $[0,1]\times B_{\frac{3}{2j}}(0)$ we can see that
\begin{align}
\hat{d}_{t,g_0}(p,q)-\frac{C}{j}  \le  \hat{d}_{t,g_j}(\Lambda_j(p),\Lambda_j(q)) \le \hat{d}_{t,g_0}(p,q)+\frac{C}{j}.
\end{align}

      If we take $(s_1,r_1),(s_2,r_2) \in L \subset P$ then we can write $\Lambda_j(s_1,r_1)=(s_1,r_1x):=p, \Lambda_j(s_2,r_2)=(s_2,r_2x):=q$ where $x \in B_{\frac{1}{j}}(0) \setminus B_{\frac{1}{j^{\lambda}}}(0)\subset \mathbb{D}^n$ is fixed. If we consider $p'=(s_2,r_1x)$ then we notice that 
      \begin{align}
          \hat{d}_{t,g_j}(p,q)&\le \hat{d}_{t,g_j}(p,p')+\hat{d}_{t,g_j}(p',q)
          \\&=|s_1-s_2|+\hat{d}_{t,g_j}(p',q).
      \end{align}
Now we connect $p'$ to $q$ with a piecewise causal curve 
\begin{align}
    \beta_j(s)=
    \begin{cases}
        (f_j(s|x|), sx) &s \in \left[r_1, \frac{|r_1-r_1|}{2}\right]
        \\(-f_j(s|x|), sx) &s \in \left[\frac{|r_1-r_1|}{2},r_2\right]
    \end{cases}
\end{align}
If we consider the case where $r_1=0$ and $r_2=1$ then we find
\begin{align}
    \hat{L}_{t,g_j}(\beta_j)&= \int_{\frac{1}{j^{\lambda}}}^{\frac{1}{j}} f_j(t)dt
    \\&= \int_{\frac{1}{j^{\lambda}}}^{\frac{1}{j}} \frac{1}{r(1-\ln(r))}dr
        \\&= -\int_{1+\ln(j^{\lambda})}^{1+\ln(j)} u^{-1}du
        \\&= \ln((1+\ln(j^{\lambda}))-\ln((1+\ln(j))= \ln\left(\frac{1+\ln(j^{\lambda})}{1+\ln(j)}\right),
\end{align}
and for any other $0\le r_1<r_2\le 1$ the length of $\beta_j$ will be as above but scaled by $|r_2-r_1|$. So we find that

\begin{align}
          \hat{d}_{t,g_j}(p,q)
          &\le |s_1-s_2|+\hat{d}_{g_j}(p',q)
        \\&  \le |s_1-s_2|+ \hat{L}_{t,g_j}(\beta_j)
        \\&  \le |s_1-s_2|+|r_1-r_2| \left(\ln\left(\frac{1+\ln(j^{\lambda})}{1+\ln(j)}\right)\right)
        \\&  \le |s_1-s_2|+\lambda|r_1-r_2|+\frac{C}{j}=d_0((s_1,r_1),(s_2,r_2))+\frac{C}{j}.
      \end{align}

      Since $\beta_j$ is the most efficient piecewise curve connecting $p'$ and $q$ we see that in fact $\hat{L}_{t,g_j}(\beta_j)=\hat{d}_{g_j}(p',q)$. For the opposite estimate we can first travel from $p$ to $p'$ by a null curve of length $|t(p)-t(p')|$ and then a curve of the form $\beta_j$ and hence
      \begin{align}
          \hat{d}_{t,g_j}(p,q)&= \hat{d}_{t,g_j}(p,p')+\hat{d}_{t,g_j}(p',q)
          \\&\ge |t(p)-t(q)|+\hat{d}_{t,g_j}((t(q),p_{\mathbb{D}^n}),q)-\frac{1}{j}\label{SmallDifferenceEq}
      \end{align}
      where we used the fact that $p,q \in B_{\frac{1}{j}}(0) \setminus B_{\frac{1}{j^{\lambda}}}(0)\subset \mathbb{D}^n$ in \eqref{SmallDifferenceEq}. Now we continue the estimate
       \begin{align}
          \hat{d}_{t,g_j}(p,q)&\ge |s_1-s_2|+\hat{d}_{t,g_j}((t(q),p_{\mathbb{D}^n}),q)-\frac{1}{j}
         \\ &\ge |s_1-s_2|+\hat{L}_{t,g_j}(\beta_j)-\frac{1}{j}
         \\&= |s_1-s_2|+\lambda |r_1-r_2|-\frac{C}{j}=d_0((s_1,r_1),(s_2,r_2))-\frac{C}{j}.
      \end{align}

If we take $p \in  (N\setminus ([0,1]\times \{0\})) \subset P $ and $q \in  L \subset P $ then by considering any $p' \in  [0,1]\times \partial B_{\frac{3}{2j}}(0)$ and $q' \in [0,1] \times \partial B_{\frac{1}{j}}(0)$ so that $t(p')=t(q')$ we notice by the triangle inequality and the arguments above that
\begin{align}
    \hat{d}_{t,g_j}(\Lambda_j(p),\Lambda_j(q))&\le \left(\hat{d}_{t,g_j}(\Lambda_j(p),p')+\hat{d}_{t,g_j}(p',q')+\hat{d}_{t,g_j}(q',\Lambda_j(q))\right)
    \\&\le   \left(\hat{d}_{t,g_0}(p,\Lambda_j^{-1}(q'))+\hat{d}_{t,g_0}(\Lambda_j^{-1}(q'),q)+\frac{C}{j}\right)
    \\&\le  \hat{d}_{t,g_0}(p,\Lambda_j^{-1}(q'))+\hat{d}_{t,g_0}(\Lambda_j^{-1}(q'),q).
\end{align}
By the definition of a quotient metric space (Definition 3.1.12 \cite{BBI}) we see that 
\begin{align}
 \hat{d}_{t,g_j}(\Lambda_j(p),\Lambda_j(q))\le \hat{d}_{t,g_0}(p,q)+\frac{C}{j}.
\end{align}
Similarly, 
\begin{align}
\hat{d}_{t,g_0}(p,q)&=\inf_{t \in [0,1]}\left[\hat{d}_{t,g_0}(p,(t,1))+\hat{d}_{t,g_0}((t,1),q)\right]
\\&\le\inf_{t \in [0,1]}\left[ \hat{d}_{t,g_j}(\Lambda_j(p),\Lambda_j((t,1)))+\hat{d}_{t,g_j}(\Lambda_j((t,1)),\Lambda_j(q))   \right]
\\&\le \hat{d}_{t,g_j}(\Lambda_j(p),\Lambda_j(q))+\frac{1}{j},\label{LastEqExSpline}
\end{align}
which gives the opposite inequality. In \eqref{LastEqExSpline} we are using the fact that any almost minimizing $\hat{d}_{g_j}$ curve connecting $p$ to $q$ must pass through $\Lambda_j([0,1]\times \{1\})$ where $[0,1]\times \{1\} \subset L$.

So we see that $\Lambda_j$ is an almost onto, almost isometry and hence $(M,\hat{d}_{t,g_j})$ is converging in the Gromov-Hausdorff sense to $(P,d_0)$ by Definition \ref{def-AlmostIsometry} and Theorem \ref{thm-AlmostIsometryImpliesGH}.
  \end{proof}

  \subsection{Blow up rate can matter}

  Here we see a further example which shows that the blow up rate is important if the region where the blow up is taking place is not shrinking in the $t$ direction. Unlike the two previous examples though, the blow up rate is below the critical rate and hence we see that the sequence of null distance spacetimes converges to Minkowski space with the null distance. 

  \begin{ex}\label{Ex-BlowupRate}
      Let $(\mathbb{D}^n,\sigma)$ be a flat disk, $f_j:[0,1]\times \mathbb{D}^n$ a sequence of continuous functions defined radially on $\mathbb{D}^n$ by
      \begin{align}f_j(r)
          \begin{cases}
          j^{\alpha} & r \in  \left[0,\frac{1}{2j}\right]
          \\h_j(r) & r \in \left[\frac{1}{2j},\frac{1}{j}\right]
\\1& \text{ otherwise }
          \end{cases}
      \end{align}
      where $\alpha \in (0,1)$ and $h_j$ is any decreasing continuous function so that $h_j(\frac{1}{2j})= j^{\alpha}$ and $h_j(\frac{1}{j})=1$. If $g_j=-dt^2+f_j^2\sigma$, $g_0=-dt^2+\sigma$ then we find
      \begin{align}
          \hat{d}_{t,g_j} \rightarrow \hat{d}_{t,g_0},
      \end{align}
      uniformly.
  \end{ex}
  \begin{proof}
       Consider $p,q \in [0,1]\times \mathbb{D}^n $. Then since $f_j \ge 1$ we see that the set of piecewise causal curves with respect to $g_j$ is strictly smaller than the corresponding set in the Minkowski product space and hence $\hat{d}_{t,g_j} \ge \hat{d}_{t,g_0}$.
       
       If $q \in J^{\pm}(p)$ then $\hat{d}_{t,g_j}(p,q)=\hat{d}_{t,g_0}(p,q)$ and there is nothing to show. 

       If $q \not\in J^{\pm}(p)$ then consider a piecewise null curve $\beta$ which realizes the $\hat{d}_{t,g_0}$ distance between $p$ and $q$. If $\beta$ avoids $[0,1] \times B_{\frac{1}{j}}(0)$ then $\hat{d}_{t,g_j}(p,q)=\hat{d}_{t,g_0}(p,q)$. If not, then let $p'_j,q'_j \in [0,1] \times \partial B_{\frac{1}{j}}(0)$ where $\beta$ intersects $[0,1] \times \partial B_{\frac{1}{j}}(0)$. We also define $p''_j=((p'_j)_{\mathbb{D}^n},t(q'_j))$ which we can connect to $q'_j$ by a piecewise causal curve $\beta:[0,\frac{1}{j}]\rightarrow [0,1]\times \mathbb{D}^n$, $\beta_j(s)=(k_j(s),\alpha(s))$ with at most one break so that $k_j(s)=\pm j^{\alpha}$ and $|\alpha'|_{\sigma}=1$. We note that $\hat{L}_{t,g_j}(\beta_j)\le j^{\alpha-1}$ in order to notice
       \begin{align}
           \hat{d}_{t,g_j}(p'_j,q'_j)&\le \hat{d}_{t,g_j}(p'_j,p''_j)+\hat{d}_{t,g_j}(p''_j,q'_j) \le |t(p'_j)+t(q'_j)|+2j^{\alpha-1}
       \end{align}
      Notice that $|t(p'_j)+t(q'_j)| \le \frac{2}{j}$ since they are points on $[0,1] \times \partial B_{\frac{1}{j}}(0)$ which can be connected by a Minkowski null curve. Then, since $\hat{d}_{t,g}(p,p'_j)+\hat{d}_{t,g}(q'_j,q) \le \hat{L}_{t,g_j}(\beta)$ by construction, we see that
       \begin{align}
           \hat{d}_{t,g_j}(p,q)& \le \hat{d}_{t,g_j}(p,p'_j)+\hat{d}_{t,g_j}(p'_j,q'_j)+\hat{d}_{t,g_j}(q'_j,q) 
           \\& \le \hat{d}_{t,g_0}(p,p'_j)+|t(p'_j)+t(q'_j)|+2j^{\alpha-1}+\hat{d}_{t,g_0}(q'_j,q) 
             \\& \le \hat{L}_{t,g_j}(\beta)+\frac{2}{j}+2j^{\alpha-1}
             \\&\le \hat{d}_{t,g_0}(p,q)+\frac{2}{j}+2j^{\alpha-1},
       \end{align}
       and we are done.
  \end{proof}

\section{Proof of Main Theorem}\label{sect-Proofs}

In this section we provide the proofs of the main theorems. We start with a theorem which establishes uniform convergence form below. This should be compared with the similar estimates in \cite{Allen-Perales-Sormani} which establish uniform convergence from below. One should note that in the Riemannian case, where $g_j$ is a sequence of Riemannian metrics which we want to relate to the Remannian metric $g_{\infty}$, the result follows almost immediately since $g_j \ge (1-\frac{1}{j})g_0$ implies an estimate on the lengths of curves and hence distances. In the case of the null distance, $f_j \ge (1-\frac{1}{j})f_{\infty}$ implies a relationship between sets of admissible curves which gives the relationship for distances. In general, this is a common difference between estimating distances with respect to Riemannian manifolds versus estimating distances with respect to the null distance of Lorentzian manifolds. 

\begin{thm}\label{thm-LowerDistanceBound}
Let $(\Sigma^n,\sigma)$ be a compact, connected Riemannian manifold, $M=[t_0,t_1]\times \Sigma$, $f_j:[t_0,t_1]\rightarrow (0,\infty)$ continuous, and $g_j=-dt^2+f_j(t)^2\sigma$, $j \in \N\cup\{\infty\}$. Then if $\tau: M \rightarrow (0,\infty)$ is a differentiable time function and 
\begin{align}
f_j(t) \ge \left(1-\frac{1}{j}\right) f_{\infty}(t)
\end{align}
then
\begin{align}
    \hat{d}_{\tau,g_j} \ge  \hat{d}_{\tau,g_{\infty}}-C(j),
\end{align}
where $C(j) \ge 0$ and $C(j) \rightarrow 0$ as $j \rightarrow \infty$.
\end{thm}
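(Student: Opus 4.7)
The plan is to establish the result by a comparison argument with an auxiliary warped product metric, leveraging the relationship between sets of admissible curves. First, I will observe that the hypothesis $f_j\ge(1-\tfrac{1}{j})f_\infty$ directly forces every $g_j$-piecewise causal curve $\beta(s)=(t(s),\alpha(s))$, which satisfies $|t'|\ge f_j(t)|\alpha'|_\sigma$ on each causal piece, to also satisfy $|t'|\ge (1-\tfrac{1}{j})f_\infty(t)|\alpha'|_\sigma$, and hence to be piecewise causal (with the same partition) with respect to the auxiliary warped product
\begin{align}
\hat g_j := -dt^2+(1-\tfrac{1}{j})^2 f_\infty^2\, \sigma.
\end{align}
Because the null length $\hat L_{\tau,g}(\beta)$ depends only on $\tau$ and the endpoints of the causal pieces, and not on which of the two metrics one uses to verify causality, infimizing over a larger set of admissible curves produces a smaller null distance, so
\begin{align}
\hat d_{\tau,g_j}(p,q)\ge \hat d_{\tau,\hat g_j}(p,q) \qquad \text{for all } p,q\in M.
\end{align}

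Next I would compare $\hat d_{\tau,\hat g_j}$ to $\hat d_{\tau,g_\infty}$. The warping function $(1-\tfrac{1}{j})f_\infty$ of $\hat g_j$ converges uniformly to $f_\infty$ on the compact interval $[t_0,t_1]$, so Theorem \ref{thm:fuconv} applied with the canonical time function $\tau(t,x)=t$ gives $\hat d_{t,\hat g_j}\to \hat d_{t,g_\infty}$ uniformly on $M\times M$. Setting
\begin{align}
C(j):=\sup_{p,q\in M}\left(\hat d_{\tau,g_\infty}(p,q)-\hat d_{\tau,\hat g_j}(p,q)\right)_+
\end{align}
and chaining the two inequalities yields $\hat d_{\tau,g_j}(p,q)\ge \hat d_{\tau,g_\infty}(p,q)-C(j)$ with $C(j)\ge 0$ and $C(j)\to 0$, which is the claim.

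The hard part will be carrying this plan out for a general time function $\tau$ rather than only for $\tau=t$, since Theorem \ref{thm:fuconv} is stated for the canonical time function. I see two routes: either revisit the proof of Theorem \ref{thm:fuconv} with a general locally anti-Lipschitz $\tau$ (the null-length formula is structurally the same, only the modulus-of-continuity estimates need to be redone for $\tau$ in place of $t$), or pass through the bi-Lipschitz equivalence of Lemma \ref{lem:timeequiv}, whose constants remain uniformly bounded along the sequence $\hat g_j$ because $(1-\tfrac{1}{j})f_\infty$ is bounded above and bounded away from zero on $[t_0,t_1]$. In the main application of this theorem inside Theorem \ref{thm-MainTheorem t} the relevant time function is $\tau=t$, so the direct invocation of Theorem \ref{thm:fuconv} already suffices.
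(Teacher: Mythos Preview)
Your proposal is essentially identical to the paper's proof: both introduce the auxiliary metric $-dt^2+(1-\tfrac{1}{j})^2 f_\infty^2\sigma$, use the inclusion of admissible curves to get $\hat d_{\tau,g_j}\ge \hat d_{\tau,\hat g_j}$, and then invoke Theorem~\ref{thm:fuconv} for the uniform convergence $\hat d_{\tau,\hat g_j}\to \hat d_{\tau,g_\infty}$. Your caveat about Theorem~\ref{thm:fuconv} being stated only for $\tau=t$ is a fair observation---the paper applies it for general $\tau$ without comment---and your suggested workarounds (reproving it for general anti-Lipschitz $\tau$, or noting that the only applications are with $\tau=t$ or the specific $\tau$ of Theorem~\ref{thm-MainTheorem tau}) are both reasonable.
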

\begin{proof}
Define the metric $\tilde{g}_{j,\infty}=-dt^2+\left(1-\frac{1}{j}\right)^2 f_{\infty}^2\sigma$ and notice that by assumption every piecewise causal curve with respect to $g_j$ is a piecewise causal curve with respect to $\tilde{g}_{j,\infty}$. Hence, since the null distance is defined as the infimum of the length of all piecewise causal curves and the length only depends on the time function we find that
\begin{align}
    \hat{d}_{\tau,g_j}(p,q) \ge \hat{d}_{\tau,\tilde{g}_{j,\infty}}(p,q), \quad \forall p,q \in M.
\end{align}

Now notice that by definition of a differentiable time function  $\frac{dt}{d \tau}>0$ and $\frac{d \tau}{dt}>0$. Hence we can rewrite
\begin{align}
\tilde{g}_{j,\infty}= -\left(\frac{dt}{d \tau} \right)^2 d\tau^2 +\left(1-\frac{1}{j}\right)^2 f_{\infty}^2\sigma   
\end{align}
and also define the conformal metrics
\begin{align}
\hat{g}_{j,\infty}=\left(\frac{d \tau}{dt} \right)^2\tilde{g}_{j,\infty}&= - d\tau^2 +\left(\frac{d \tau}{dt} \right)^2\left(1-\frac{1}{j}\right)^2 f_{\infty}^2\sigma,
\\\hat{g}_{\infty}=\left(\frac{d \tau}{dt} \right)^2\tilde{g}_{\infty}&= - d\tau^2 +\left(\frac{d \tau}{dt} \right)^2 f_{\infty}^2\sigma.
\end{align}
Then we notice that $\tilde{g}_{j,\infty} \rightarrow g_{\infty}$ and $\hat{g}_{j,\infty} \rightarrow \hat{g}_{\infty}$  uniformly so by Theorem \ref{thm:fuconv}, which is applicable since we can take $\tau$ as the coordinate time for $\hat{g}_{j,\infty}$ and $g_{\infty}$, we see that
\begin{align}
    \hat{d}_{\tau,\hat{g}_{j,\infty}} \rightarrow \hat{d}_{\tau,\hat{g}_{\infty}},
\end{align}
uniformly. Then by the conformal invariance of the null distance, Theorem \ref{thm-BasicNullProperties}, we know
\begin{align}
    \hat{d}_{\tau,\tilde{g}_{j,\infty}} \rightarrow \hat{d}_{\tau,g_{\infty}},
\end{align}
which implies the desired result.
\end{proof}

Now we give the proof of Theorem \ref{thm-MainTheorem tau} where we can take advantage of the conformal invariance of the null distance, Theorem \ref{thm-BasicNullProperties}, to simplify the argument.

\begin{proof}[Proof of Theorem \ref{thm-MainTheorem tau}]
 We can modify the following metrics,
\begin{align}
    g_j&=-dt^2+f_j(t)^2\sigma,
\\g_{\infty}&=-dt^2+f_{\infty}(t)^2 \sigma,
\end{align}
by multiplying by a conformal factor

\begin{align}
    \tilde{g}_j&=\frac{1}{f_{\infty}(t)^2}g_j=-\frac{1}{f_{\infty}(t)^2}dt^2+\frac{f_j(t)^2}{f_{\infty}(t)^2}\sigma,
\\\tilde{g}_{\infty}&=\frac{1}{f_{\infty}(t)^2}g_{\infty}=-\frac{1}{f_{\infty}(t)^2}dt^2+ \sigma.
\end{align}
If we define a new time function
\begin{align}
    \tau(t)&= \int_{t_0}^{t} \frac{1}{f_{\infty}(r)}dr, \quad \tau \in [0,S], t \in [t_0,t_1],
\end{align}
then we can rewrite
\begin{align}
    \tilde{g}_j&=-d\tau^2+\left(\frac{f_j(\tau)}{f_{\infty}(\tau)}\right)^2\sigma,
\\\tilde{g}_{\infty}&=-d\tau^2+ \sigma,
\end{align}
and note that by the conformal invariance of the null distance, Theorem \ref{thm-BasicNullProperties}, we know
\begin{align}
    \hat{d}_{\tau,g_j}&= \hat{d}_{\tau,\tilde{g}_j},
    \\ \hat{d}_{\tau,g_{\infty}}&= \hat{d}_{\tau,\tilde{g}_{\infty}}.
\end{align}
Now if we define $\tilde{f}_j(\tau)=\frac{f_j(\tau)}{f_{\infty}(\tau)} \ge 1$ and we know that $f_j(t)$ converges to $f_{\infty}(t)$ in $L^1$ then by the fact that $d\tau=\frac{1}{f_{\infty}(t)}dt$ and $0<c<f_{\infty}<\infty$ we know that $\tilde{f}_j(\tau)$ converges to $1$ in $L^1$ since
\begin{align}
    \int_{t_0}^{t_1}|f_j-f_{\infty}|dt&=\int_{t_0}^{t_1}\left|\frac{f_j}{f_{\infty}}-1\right|f_{\infty}dt 
    \\&= \int_0^S |\tilde{f}_j-1|f_{\infty}^2d\tau \ge c^2\int_0^S |\tilde{f}_j-1| d\tau.
\end{align}

Now, consider a  null curve with respect to $\tilde{g}_{\tau,\infty}$, $\bar{\beta}(s)=(s,\alpha(s))$ where $|\alpha'(s)|_{\sigma}=1$. We would like to approximate that curve by a piecewise null curve $\bar{\beta}_j(s)=(h_j(s),\alpha(s))$ so that  $|h_j'(s)|=\tilde{f}_j(s)\ge 1$ and $|\alpha'(s)|_{\sigma}=1$. We would like to choose $\bar{\beta}_j$ to have as few breaks as necessary and such that $\bar{\beta}_j$ remains in $M$. On a segment $[a,b] \subset [0,S]$ for which $\bar{\beta}_j$ is null we can calculate 
\begin{align}
\hat{L}_{g_j,\tau}(\bar{\beta}_j|_{[a,b]}) &= \int_a^b \tilde{f}_j(r)dr \rightarrow |b-a|.\label{LengthCurveConverges}
\end{align}
We also know that for any  $[a,b] \subset [0,S]$
\begin{align}
    \left|\int_a^b \tilde{f}_j(r)dr-\int_a^b 1dr \right| \le \int_a^b|\tilde{f}_j(r)-1|dr \le \int_0^S|\tilde{f}_j(r)-1|dr \rightarrow 0,
\end{align}
and hence the convergence in \eqref{LengthCurveConverges} is uniform for any $[a,b] \subset [0,S
]$. 

Hence $|t(\bar{\beta}(S))-t(\bar{\beta}_j(S))| \le C(j)$ where $C(j) \rightarrow 0$ as $j \rightarrow \infty$. So we can extend $\bar{\beta}_j$ by moving solely in the $t$-direction at most $C(j)$ so that $\bar{\beta}$ and $\bar{\beta}_j$ connect the same endpoints. 

Hence we see that 
\begin{align}
    \hat{d}_{\tau, g_j}(\bar{\beta}(a),\bar{\beta}(b))& \le \hat{L}_{\tau, g_j}(\bar{\beta}_j) 
    \\&= |b-a|+C(j) = \hat{d}_{\tau, g_{\infty}}(\bar{\beta}(a),\bar{\beta}(b))+C(j).
\end{align}

Now consider a piecewise null curve $\beta$ which realizes the distance between $p$ and $q$ with respect to $\tilde{g}_{\infty}$. One should choose $\beta$ with the minimum number of null curve pieces which can be bounded by a constant $N:=N(S, \diam(\Sigma,\sigma))\in \N$ which does not depend on $p,q$. If for each null curve piece $\bar{\beta}$  of $\beta$,  we replace it by $\bar{\beta}_j$ as above to obtain $\beta_j$, then we see that
\begin{align}
 \hat{d}_{\tau,\tilde{g}_j}(p,q)&\le    \hat{L}_{\tau,\tilde{g}_{j}}(\beta_j)
\le \hat{d}_{\tau,\tilde{g}_{\infty}}(p,q)+NC(j).
\end{align}

By the assumption that $\tilde{f}_j \ge 1$ and Theorem \ref{thm-LowerDistanceBound} we have that 
\begin{align}
\hat{d}_{\tau,\tilde{g}_{\infty}}(p,q) \le  \hat{d}_{\tau,\tilde{g}_j}(p,q),
\end{align}
and hence we are done.
\end{proof}

Now we give the proof of Theorem \ref{thm-MainTheorem t} which uses Theorem \ref{thm-MainTheorem tau} to establish uniform convergence of $\hat{d}_{\tau,g_j}$ and then the main work of the argument is to show that this also implies uniform convergence of $\hat{d}_{t,g_j}$.

\begin{proof}[Proof of Theorem \ref{thm-MainTheorem t}]

Consider
\begin{align}
    \tilde{g}_j = -dt^2+\left(1-\frac{1}{j}\right)^{-2}f_j(t)^2\sigma =-dt^2+\tilde{f}_j(t)^2\sigma
\end{align}
so that
\begin{align}
\tilde{f}_j(t)= \left(1-\frac{1}{j}\right)^{-1}f_j(t) &\ge f_{\infty}(t).
\end{align}
By Theroem \ref{thm-MainTheorem tau}  
\begin{align}
    \tau(t)=\int_{t_0}^t\frac{1}{f_{\infty}(r)}dr, \quad t \in [t_0,t_1],
\end{align}
and hence
\begin{align}
    \hat{d}_{\tau,\tilde{g}_j} \rightarrow \hat{d}_{\tau,g_{\infty}},
\end{align}
uniformly. Now by Theorem \ref{thm-LowerDistanceBound} and the fact that
\begin{align}
 \left(1-\frac{1}{j}\right)f_{\infty}(t) \le f_j(t)\le \left(1-\frac{1}{j}\right)^{-1}f_j(t)=\tilde{f}_j(t),  
\end{align} 
we find
\begin{align}
 \hat{d}_{\tau,g_{\infty}}-C(j) \le  \hat{d}_{\tau,g_j} \le \hat{d}_{\tau,\tilde{g}_j},
\end{align}
and hence
\begin{align}\label{UniformConvergenceProof2}
     \hat{d}_{\tau,g_j} \rightarrow \hat{d}_{\tau,g_{\infty}},
\end{align}
uniformly as well.

   Now by Lemma \ref{lem:timeequiv} we know that
    \begin{align}\label{UniformBoundProof2}
     \frac{1}{C} \hat{d}_{\tau,g_j}(p,q)   \le \hat{d}_{t,g_j}(p,q) \le C \hat{d}_{\tau,g_j}(p,q).
    \end{align}

Now we would like to show that the Arzel\`{a}-Ascoli Theorem  applies to the sequence $\hat{d}_{t,g_j}$. To this end, fix $\varepsilon>0$ and note that by \eqref{UniformConvergenceProof2} there exists a $J \in \N$ so that for $j > J$ we find
\begin{align}
    |\hat{d}_{\tau,g_j}(p,q)-\hat{d}_{\tau,g_{\infty}}(p,q)|<\frac{\varepsilon}{2C}.
\end{align}

Since $\hat{d}_{\tau,g_j}-\hat{d}_{\tau,g_{\infty}}$ is continuous, for each $j \le J$ we can choose a $\delta_j$ so that for $p,q \in B_{\hat{d}_{\tau,g_{\infty}}}(x,\delta_j)=\{p\in M:\hat{d}_{\tau,g_{\infty}}(x,p)\le \delta_j\}$ we find
\begin{align}
    |\hat{d}_{\tau,g_j}(p,q)-\hat{d}_{\tau,g_{\infty}}(p,q)|<\frac{\varepsilon}{4C}.
\end{align}
If we set $\delta=\min\{\delta_1,...,\delta_J,\frac{\varepsilon}{4C}\}$ we find for $p \in B_{\hat{d}_{\tau,g_{\infty}}}(x,\delta)=\{p\in M:\hat{d}_{\tau,g_{\infty}}(x,p)\le \delta\}$ that
\begin{align}
 \hat{d}_{t,g_j}(x,p)&\le   C \hat{d}_{\tau,g_j}(x,p)
 \\&\le C\hat{d}_{\tau,g_{\infty}}(x,p)+C|\hat{d}_{\tau,g_j}(x,p)-\hat{d}_{\tau,g_{\infty}}(x,p)|
 \\&\le C(\delta)+C\left(\frac{\varepsilon}{4C}\right) \le \frac{\varepsilon}{2}.
\end{align}
Now for $x,y \in M$ and  $(x',y') \in B_{\hat{d}_{\tau,g_{\infty}}}(x,\delta)\times B_{\hat{d}_{\tau,g_{\infty}}}(y,\delta)$, by the triangle inequality we find
\begin{align}
    |\hat{d}_{t,g_j}(x,y)-\hat{d}_{t,g_j}(x',y')| \le \hat{d}_{t,g_j}(x,x')+\hat{d}_{t,g_j}(y,y') \le \frac{\varepsilon}{2}+\frac{\varepsilon}{2}=\varepsilon.
\end{align}
Hence we see that $\hat{d}_{t,g_j}$ is equibounded and equicontinuous and hence by the Arzel\`{a}-Ascoli Theorem we know that a subsequence converges uniformly to a function $d_0:M\times M\rightarrow [0,\infty)$ which is symmetric and satisfies the triangle inequality.

 We know from Theorem \ref{thm-LowerDistanceBound} that
    \begin{align}\label{LowerDistanceFact}
        \hat{d}_{t,g_{\infty}}-C(j) \le \hat{d}_{t,g_j},
    \end{align}
    and hence we see that $\hat{d}_{t,g_{\infty}} \le d_0$ and $d_0$ is positive definite. 
    
   Now we will estimate $\hat{d}_{t,g_j}$ in terms of $\hat{d}_{t,g_{\infty}}$ and show pointwise convergence to $\hat{d}_{t,g_{\infty}}$. To this end, consider $p,q \in M$ so that $q \in J^{\pm}_{g_{\infty}}(p)$ and hence
\begin{align}
    \hat{d}_{\tau,g_{\infty}}(p,q)&=|\tau(p)-\tau(q)|,
    \\ \hat{d}_{t,g_{\infty}}(p,q)&=|t(p)-t(q)|.
\end{align}
Now by \eqref{UniformConvergenceProof2} we know that
\begin{align}\label{relateEstimate}
    \hat{d}_{\tau,g_j}(p,q)& \le|\tau(p)-\tau(q)|+C'(j),
\end{align}
where $C'(j) \rightarrow 0$ as $j \rightarrow \infty$ and we want to relate \eqref{relateEstimate} to $\hat{d}_{t,g_j}$.

Let $\beta_j:[0,1]\rightarrow M$ be a piecewise null curve with respect to $g_j$ so that
\begin{align}
    \hat{L}_{\tau,g_j}(\beta_j) \le \hat{d}_{\tau,g_j}(p,q)+\frac{1}{j} \le |\tau(p)-\tau(q)|+\frac{1}{j}+C'(j).\label{AlmostMinimizingEq}
\end{align}
Choose $I_j \subset [0,1]$ to be a union of finitely many intervals $N_j \in \N$ so that for all $t \in I_j$ we find $\beta_j'(t)$ is always future pointing or always past pointing and 
\begin{align}
\hat{L}_{\tau,g_j}(\beta_j|_{I_j})=|\tau(p)-\tau(q)|. \label{WellChosenSegmentEq} 
\end{align}
This can be done since $\tau$ is increasing along all future pointing curves and $\beta_j$ is a continuous curve connecting $p$ to $q$.
Note, since $t$ is also increasing along all future pointing curves, \eqref{WellChosenSegmentEq} also implies that
\begin{align}
\hat{L}_{t,g_j}(\beta_j|_{I_j})=|t(p)-t(q)|.  \label{WellChosenSegmentEq2} 
\end{align}
When we combine \eqref{AlmostMinimizingEq} with \eqref{WellChosenSegmentEq} we find
\begin{align}
   \hat{L}_{\tau,g_j}(\beta_j)&= \hat{L}_{\tau,g_j}(\beta_j|_{I_j})+\hat{L}_{\tau,g_j}(\beta_j|_{I_j^c}) 
   \\&= |\tau(p)-\tau(q)|+\hat{L}_{\tau,g_j}(\beta_j|_{I_j^c}) \le |\tau(p)-\tau(q)|+\frac{1}{j}+C'(j),
\end{align}
and hence
\begin{align}
   \hat{L}_{\tau,g_j}(\beta_j|_{I_j^c}) \le  \frac{1}{j}+C'(j).\label{PastPiecesEstimate}
\end{align}
Decompose $\beta_j|_{I_j^c}$ into the minimum number of pieces $\beta_j^1,...,\beta_j^P$, $P \in \N$, with break points $p_j^0,p_j^1,...,p_j^P$, so that each curve is either a continuous future or past null curve. Since $\tau$ is anti-Lipschitz (See Definition \ref{def-AntiLipschitz} and Lemma 4.10 and Proposition 4.12 of \cite{SV}) we know that for $g_0=dt^2+\sigma$, $K>0$ we can estimate
\begin{align}
    \hat{L}_{\tau,g_j}(\beta_j|_{I_j^c})= \sum_{i=1}^P |\tau(p_j^{i+1})-\tau(p_j^{i})| \ge K \sum_{i=1}^P d_{g_0}(p_j^{i+1},p_j^{i}).\label{AntiLipschitzPastPieceEstimate}
\end{align}
When \eqref{PastPiecesEstimate} is combined with \eqref{AntiLipschitzPastPieceEstimate} we find
\begin{align}
    \hat{L}_{t,g_j}(\beta_j|_{I_j^c}) &= \sum_{i=1}^P |t(p_j^{i+1})-t(p_j^{i})| 
    \\&\le  \sum_{i\in I_p} d_{g_0}(p_j^{i+1},p_j^{i}) \le \frac{1}{K}\left(\frac{1}{j}+C'(j) \right), \label{EvenCurveEstimate}
\end{align}
where we have used that $t$ is Lipschitz with respect to $g_0$. Now we can apply \eqref{WellChosenSegmentEq2} and \eqref{EvenCurveEstimate} to estimate
\begin{align}
   \hat{d}_{t,g_j}(p,q)&\le\hat{L}_{t,g_j}(\beta_j)  
  \\& = \hat{L}_{t,g_j}(\beta_j|_{I_j})+\hat{L}_{t,g_j}(\beta_j|_{I_j^c})
   \\&\le|t(p)-t(q)|+\frac{1}{K}\left(\frac{1}{j}+C'(j) \right).\label{AlmostDifferencint-values}
\end{align}

Now consider any $p,q \in M$ and for $\varepsilon>0$ let $\beta$ be a piecewise null curve with respect to $g_{\infty}$ so that $\hat{d}_{t,g_{\infty}}(p,q)\ge\hat{L}_{t,g_{\infty}}(\beta)-\varepsilon$. Let $p=p_0,p_1,p_2,...,p_{k-1},p_k=q$ be the minimal number of points so that $\beta$ connects these points via null curves. Then we see by the triangle inequality that
\begin{align}
 \hat{d}_{t,g_j}(p,q)& \le \sum_{i=1}^k   \hat{d}_{t,g_j}(p_{i-1},p_i), \label{DistanceConsequence} 
\end{align}
and by taking the limsup on both sides of \eqref{DistanceConsequence} and applying \eqref{AlmostDifferencint-values} we find
\begin{align}
\limsup_{j\rightarrow \infty} \hat{d}_{t,g_j}(p,q)& \le \limsup_{j\rightarrow \infty}\sum_{i=1}^k   \hat{d}_{t,g_j}(p_{i-1},p_i)
\\& \le \sum_{i=1}^k\limsup_{j\rightarrow \infty}   \hat{d}_{t,g_j}(p_{i-1},p_i)
\\& =\sum_{i=1}^k   |t(p_{i-1})-t(p_i)|
\\& = \hat{L}_{t,g_{\infty}}(\beta) \le \hat{d}_{t,g_{\infty}}(p,q)+ \varepsilon.
\end{align}
Since this is true for all $\varepsilon>0$ we have the desired result and when combined with \eqref{LowerDistanceFact} we see that $\hat{d}_{t,g_j}$ converges pointwise to $\hat{d}_{t,g_{\infty}}$. By the uniform convergence established earlier, we see that $d_0=\hat{d}_{t,g_{\infty}}$. Furthermore, the subsequence taken when applying the Arzel\`{a}-Ascoli theorem was not needed since any subsequence which supposedly did not converge to $\hat{d}_{t,g_{\infty}}$ would satisfy the same hypotheses as the original sequence and hence would also have a subsequence converging to $\hat{d}_{t,g_{\infty}}$. Hence $\hat{d}_{t,g_j}$ converges uniformly to $\hat{d}_{t,g_{\infty}}$, as desired.
\end{proof}

\bibliography{bibliography}

\end{document}